\newcommand{\ie}{{\em i.e., }} 
\newcommand{\etal}{{\em et al.}}
\newcommand{\suchthat}{\mid} 
\newcommand{\outputs}{\pi_{\mathit{out}}}
\newcommand{\popt}{P_\mathrm{TO}} 
\newcommand{\bfpopt}{\boldsymbol{\popt}} 
\newcommand{\pep}{P_\mathrm{EP}} 
\newcommand{\pcd}{P_\mathrm{CD}}
\newcommand{\rs}{\mathbf{\Gamma}}
\newcommand{\ifthen}[2]{\textbf{if} #1 
\textbf{then} #2 \textbf{endif}} 
\newcommand{\fordo}[2]{\textbf{for} #1 \textbf{do} #2}
\newcommand{\vblank}{\vspace{0.2cm}}
\newcommand{\rmax}{r_{\mathrm{max}}}
\newcommand{\rmid}{r_{\mathrm{mid}}}
\newcommand{\smax}{s_{\mathrm{max}}}
\newcommand{\bmax}{b_{\mathrm{max}}}
\newcommand{\tr}{1}
\newcommand{\fl}{0}
\newcommand{\para}{\tau}
\newcommand{\checking}{\mathit{CH}}
\newcommand{\elect}{\mathit{EL}}
\newcommand{\cand}{A}
\newcommand{\timer}{B}
\newcommand{\vl}{V_{L}}
\newcommand{\vf}{V_{F}}
\newcommand{\vcheck}{V_{\checking}}
\newcommand{\velect}{V_{\elect}}
\newcommand{\vclarge}{V_{\checking \ge}}
\newcommand{\vcsmall}{V_{\checking <}}
\newcommand{\vtimer}{V_{\timer}}
\newcommand{\vcand}{V_{\cand}}
\newcommand{\vdone}{V_{\mathrm{done}}}
\newcommand{\vundone}{V_{\mathrm{undone}}}
\newcommand{\leader}{\mathtt{leader}} 
\newcommand{\phase}{\mathtt{phase}}
\newcommand{\mode}{\mathtt{mode}}
\newcommand{\rtimer}{\mathtt{timer}_R} 
\newcommand{\detect}{\mathtt{detect}}
\newcommand{\level}{\mathtt{level}}
\newcommand{\done}{\mathtt{done}}
\newcommand{\btimer}{\mathtt{timer}_{\timer}}
\newcommand{\quick}{\mathit{QE}}
\newcommand{\bfquick}{\boldsymbol{\quick}} 
\newcommand{\toelect}{\mathit{GoToElection}}
\newcommand{\ex}{\mathbf{E}} 
\newcommand{\eht}[3]{\mathrm{EIH}_{#1}(#2,#3)} 
\newcommand{\ect}[3]{\mathrm{EIC}_{#1}(#2,#3)} 
\newcommand{\sle}{\mathit{LE}}
\newcommand{\cala}{\mathcal{A}}
\newcommand{\call}{\mathcal{C}_{\mathrm{all}}} 
\newcommand{\ccheck}{\mathcal{A}_{\checking}}
\newcommand{\celect}{\mathcal{A}_{\elect}}
\newcommand{\creset}{\mathcal{C}_{\mathrm{reset}}}
\newcommand{\cclarge}{\mathcal{A}_{\checking \ge}}
\newcommand{\ccsmall}{\mathcal{A}_{\checking <}}
\newcommand{\celarge}{\mathcal{C}_{\elect \ge}}
\newcommand{\safe}{\mathcal{S}}
\newcommand{\lagent}{v_l}
\newcommand{\nva}{n_{\cand}}
\newcommand{\nvm}{n_{M}}
\newcommand{\pinc}{p_{\mathrm{inc}}}
\newcommand{\pdec}{p_{\mathrm{dec}}}
\newcommand{\logn}{\lceil \log n \rceil}
\newcommand{\cell}[2]{\begin{tabular}{#1}#2\end{tabular}}
\newtheorem{theorem}{Theorem}
\newtheorem{lemma}{Lemma}
\theoremstyle{definition}
\newtheorem{definition}{Definition}
\title{Time-optimal Loosely-stabilizing Leader Election in Population Protocols}
\date{}
\author[1]{Yuichi Sudo\thanks{Corresponding author:y-sudou[at]ist.osaka-u.ac.jp}}
\author[2]{Ryota Eguchi}
\author[2]{Taisuke Izumi}
\author[1]{Toshimitsu Masuzawa}
\affil[1]{Graduate School of Information Science and Technology, Osaka University, Japan}
\affil[2]{Graduate School of Engineering, Nagoya Institute of Technology, Japan}
\begin{document}
\maketitle
\begin{abstract}
\normalsize
We consider the leader election problem in population protocol models. In pragmatic settings of population protocols, self-stabilization is a highly desired feature owing to its fault resilience and the benefit of initialization freedom.
However, the design of self-stabilizing leader election is possible only under a strong assumption
(\ie the knowledge of the \emph{exact} size of a network) and rich computational 
resource (\ie the number of states). Loose-stabilization, introduced by Sudo \etal~[Theoretical Computer Science, 2012],
is a promising relaxed concept of self-stabilization to address the aforementioned issue. 
Loose-stabilization guarantees that starting from any configuration, the network will reach a safe configuration where a single leader exists within a short time, and thereafter it will maintain the single leader for a long time, but not forever. The main contribution of the paper is a time-optimal loosely-stabilizing leader election protocol. While the shortest convergence time achieved so far in loosely-stabilizing leader election is $O(\log^3 n)$ parallel time,
the proposed protocol with design parameter $\para \ge 1$ attains $O(\para \log n)$ parallel convergence time and
$\Omega(n^{\tau})$ parallel holding time (\ie the length of the period keeping the unique leader), both in expectation.
This protocol is time-optimal in the sense of both the convergence and holding times in expectation because
any loosely-stabilizing leader election protocol with the same length of the holding time
is known to require $\Omega(\tau \log n)$ parallel time.
\end{abstract}

\section{Introduction}
\label{sec:intro}
We consider the \emph{population protocol} 
(PP) model \cite{original} in this paper. 
A network called the \emph{population} consists 
of $n$ automata called \emph{agents}. 
Pairs of agents execute \emph{interactions} 
(\ie pairwise communication) 
by which they update their states. 
These interactions are opportunistic, 
that is, they are unknown and unpredictable 
(or only predictable with probability). 
Agents are strongly anonymous: 
they do not have identifiers 
and cannot distinguish 
neighbors with the same state. 
As with the majority of studies on population 
protocols 
\cite{original,fast,AG15,BCER17,AAG18,BKKO18,GS18,GSU18,Sud+20,SOI+19+}, 
we assume 
%the scheduler 
that exactly one pair of agents
is selected to have an interaction
uniformly at random from all $\binom{n}{2}$ pairs
at each step. 
In the PP model, 
time complexity such as expected convergence time 
is usually evaluated in 
\emph{parallel time},
that is, the number of 
steps divided by $n$ (\ie the number of agents). 
This is a natural measure of time because 
in practice, 
interactions typically occur in parallel in 
the population. 
For the remainder of this section,
we presume parallel time
when we discuss time complexity.

In this paper, we focus on the problem of 
self-stabilizing leader election (SS-LE).
%which is one of the most important 
%and well-studied problems in the 
%PP model. 
%Self-stabilizing leader election
This problem requires 
that (i) starting from any configuration, 
a population reaches a safe configuration in 
which exactly one leader exists; 
and thereafter, (ii) it keeps this leader forever. 
These requirements guarantee 
tolerance against finitely many transient faults. 
Since many protocols (self-stabilizing or non-self-stabilizing) 
in the literature assume a unique leader 
\cite{original,jikoantei,fast}, 
SS-LE is key to improving fault-tolerance 
of the PP model itself. 
However, it is known that no protocol can solve SS-LE 
unless every agent in the population 
knows the \emph{exact size} $n$ of the population 
\cite{jikoantei,SIW12} \footnotemark{}.
\footnotetext{ 
Strictly speaking, they prove a slightly weaker
impossibility. However, we can prove this impossibility
based on almost the same technique:
a simple partitioning argument.
See \cite{Sud+20} for details (page 618, footnote). 
%Assume that a SS-LE protocol $P$ works 
%without knowledge of exact size $n$ of the 
%population. 
%Then, there must exist $n_1, n_2, (n_1 < 
%n_2)$ such that 
%$P$ works correctly both when $n=n_1$ and 
%when $n=n_2$. 
%Consider an execution of $P$ on a 
%population with size $n_2$, 
%which starts from a safe configuration where 
%exactly one leader exists. 
%If some $n_2-n_1$ agents, including the unique 
%leader 
%do not interact for a sufficiently 
%long time, the rest of the population 
%consisting of the other $n_1$ agents 
%must create a new leader to satisfy the 
%convergence requirement of 
%self-stabilization. 
%However, this results in two leaders 
%in the population, 
%violating the closure requirement of 
%self-stabilization. 
} 
%Thus, most of the studies 
%extend ({\em i.e.}, strengthen) the PP model 
%to circumvent that impossibility. 
%One approach 
%\cite{SIW12,mizoguchisan,xusan,Bur+19} 
%is to assume that every agent knows the exact $n$, 
%and focuses on space complexity to solve 
%SS-LE. 
%Another approach \cite{oracle, twooracles, 
%onebit} is to solve SS-LE 
%using \emph{oracles} which eventually tell 
%every agent whether or not there exists 
%an agent in a leader-state. 
Under this strong assumption
(\ie all agents know exact $n$),
several SS-LE protocols have been presented in the literature.
Cai \etal~\cite{SIW12} gave
the first SS-LE protocol under this assumption,
which elects the unique leader within $O(n^2)$ time
starting from any configuration.
Recently, Burman \etal~\cite{Bur+19} gave three SS-LE protocols,
which improve the convergence time at the cost of space complexity, that is, the number of states per agent
(Table \ref{table:le}). For example, one of their protocols converges in $O(n)$ time but uses a super-exponential number of states.

\begin{savenotes}
\begin{table}[t] 
\center 
\caption{Self/Loosely-stabilizing leader 
election in the PP model 
(shown in parallel time)} 
\label{table:le} 
\begin{tabular}[t]{c c c c c c c} 
\hline 
& Type & \cell{c}{Knowledge $N$}  
&\cell{c}{Convergence\\ time}& 
\cell{c}{Holding\\ time}& \cell{c}{\#states}
& \cell{c}{Design \\ parameter}\\ 
\hline 
\cite{SIW12} 
&SS-LE&$N=n$& $O(n^2)$ &$\infty$ 
&$n$ &-\\ 
\cite{Bur+19}
&SS-LE&$N=n$& $O(n \log n)$ &$\infty$ 
&$O(n)$ &- \\ 
\cite{Bur+19}
&SS-LE&$N=n$& $O(n)$ &$\infty$ 
&$n^{O(n)}$ &- \\ 
\cite{Bur+19}
&SS-LE&$N=n$& $O(\log n)$ &$\infty$ 
&$\infty$ &-\\ 
\hline 
\cite{kanjiko} 
&LS-LE&
%$N$ ($n \le N=O(n)$)
$n \le N=O(n)$
&$O(n)$~
\footnote{The convergence time of this protocol was proven
to be $O(n \log n)$ in \cite{kanjiko}.
Later, it was found to be $O(n)$ according to
Lemma 1 in \cite{ADK+17}.}
&$\Omega(e^n)$ 
&$O(n)$ &-\\ 
\cite{kakai} 
&LS-LE&
$n \le N=O(n)$
&$O(n)$ &$\Omega(e^{n})$ &$O(n)$
&-\\ 
\cite{Sud+20}
&LS-LE& 
$n \le N=\mathit{poly}(n)$
&$O(\para \log^3 n)$ &$\Omega(n^{\para})$ 
&$O(\para^2 \log^5 n)$
& any $\para \ge 10$\\ 
ours
&LS-LE& 
$n \le N=\mathit{poly}(n)$
&$O(\para\log n)$ &$\Omega(n^{\para})$ 
&$O(\para \log n)$
& any $\para \ge 1$ \\ 
%&LS-LE&
%$N$ ($n \le N=O(n)$)
%&$O(N \log N)$ &$\Omega(e^N)$ 
%&$O(N)$ \\ 
%\cite{kakai} 
%&LS-LE&
%$N$ ($n \le N=O(n)$)
%&$O(N)$ &$\Omega(e^{N})$ &$O(N)$ \\ 
%\cite{Sud+20}
%&LS-LE& 
%$N$ ($n \le N=\mathit{poly(n)}$)
%&$O(\para \log^3 N)$ &$\Omega(n^{\para})$ 
%&$\mathit{poly}(\para\log N)$ \\ 
%this work
%&LS-LE& 
%$N$ ($n \le N=\mathit{poly(n)}$) &$O(\para\log N)$ &$\Omega(n^{\para})$ 
%&$O(\para \log N)$ \\ 
\hline 
\end{tabular} 
\end{table} 
\end{savenotes}

We can discard the assumption of exact-$n$-knowledge
by slightly relaxing the requirement of self-stabilization,
that is, by taking an approach called \emph{loose-stabilization}.
%%To solve SS-LE in a more practical way, 
%Sudo \etal~\cite{kanjiko} introduced
%the concept of 
%$\emph{loose-stabilization}$, 
%which relaxes the closure requirement of 
%self-stabilization, 
%but keeps its advantage in practice. 
Loose-stabilization guarantees 
that
the population reaches a safe 
configuration 
within a relatively short time
starting from any initial configuration; 
after that, the specification of the problem 
(such as having a unique leader in the leader election) 
must be sustained for a sufficiently long 
time, though not necessarily forever. 
%In \cite{kanjiko}, 
Sudo \etal~\cite{kanjiko}
gave a loosely-stabilizing leader election (LS-LE) 
protocol by assuming that every agent knows a common 
\emph{upper bound} $N$ of $n$. 
Their protocol is not self-stabilizing;
however, it is practically equivalent 
to an SS-LE protocol because
it maintains the unique leader for an
exponentially long time 
after reaching a safe configuration.
Further, it converges in a safe configuration within $O(N)$ time
starting from any configuration.
Hence, the convergence time is $O(n)$ if 
we have a good upper bound $N=O(n)$.
In practice, the knowledge of $N$ 
is a much weaker assumption than exact-$n$-knowledge;
the protocol works correctly 
even if we consider a large overestimation of $n$, 
such as $N = 100n$. 
%Izumi \cite{kakai} gave a method 
%which reduces the convergence time of this 
%protocol 
%to linear, {\em i.e.}, $O(N)$. 
%
%
%This convergence time is optimal. 
%He proves that any LS-LE protocol 
%requires $\Omega(N)$ parallel convergence 
%time if it keeps the unique leader for 
%$\Omega(e^N)$ parallel time 
%after reaching a safe configuration. 
%He also proves that his protocol is optimal 
%in terms of space complexity in bits, 
%that is, he shows that any LS-LE protocol 
%with $\Omega(e^N)$ parallel time 
%requires $\Omega(\log N)$-bit memory space per agent. 
Recently, Sudo \etal~\cite{Sud+20} gave 
an LS-LE protocol with poly-logarithmic convergence time,
which has a design parameter $\para~(\ge 10)$
controlling the convergence and holding times.
Given an upper bound $N \ge n$ such that $N=O(n^c)$
for some constant $c$,
their protocol reaches a safe configuration
within $O(\para \log^3 n)$ time,
and thereafter, it keeps the single leader
for $\Omega(n^{\para})$ time, both in expectation.

%状態数に関する下界は本論文の設定で成立しないので
%Introductionからは除外．Conclusionには含めるかもしれません．
%成立しないのは，m \ge log_2 nの知識があるからです．
%（首藤）

Izumi \cite{kakai} provided a lower bound
on the convergence time of an LS-LE protocol,
given that it keeps the unique leader 
for an exponentially long time 
after reaching a safe configuration.
%(\ie the holding time of the protocol is exponential in $N$).
Sudo \etal~\cite{Sud+20} generalized this lower bound as follows:
if 
%a LS-LE protocol keeps the unique leader
%for $\beta$ parallel time in expectation after reaching
%a safe configuration,
the expected holding time of an LS-LE protocol is $\beta/n$,
its expected convergence time must be $\Omega(\log \beta)$.
%If we require an expected holding time of $\Omega(n^{\para})$,
Therefore, 
we have a gap of $\log^2 n$ factor between
this lower bound %on the convergence time
and the upper bound given by Sudo \etal~\cite{Sud+20}
when we require an expected holding time of $\Omega(n^{\para})$:
the former is $\Omega(\para \log n)$
and the latter is $O(\para \log^3 n)$.

%A number of results in the PP model 
%assume a uniformly random scheduler, that 
%is, an interacting pair of agents 
%is chosen uniformly at random 
%at each step 
%\cite{original,fast, 
%AG15,BCER17,AAG18,BKKO18,GS18,GSU18,SOI+18,DS18,kakai,kanjiko,majority}. 
%We also adopt this uniformity assumption. 
%In the PP model, 
%time complexities, such as convergence time 
%and holding time, are often evaluated in 
%\emph{parallel time}, 
%which is defined as the expected number of 
%steps (or interactions), 
%divided by $n$ (i.e., the number of agents). 
%This is a natural measure of time because 
%in practice, 
%interactions typically occur in parallel in 
%the population. 

\subsection{Our Contribution} 
\label{sec:contribution} 
We remove the above-mentioned gap in this paper.
That is, we develop an LS-LE protocol
whose expected convergence time
is $O(\para \log n)$  
and expected holding time is $\Omega(n^{\para})$,
where $\para \ge 1$ is the design parameter of the protocol.
Interestingly, this convergence time is optimal for any length of holding time.
For $\para \ge 1$, we have no asymptotic gap
between this convergence time and
the lower bound given by Sudo \etal~\cite{Sud+20}.
Even if a holding time of $o(n)$ is sufficient,
%our protocol with $\para = 1$ has the opitimal convergence time.
the expected convergence time of our protocol with $\para=1$ remains optimal.
This is because every LS-LE protocol requires $\Omega(\log n)$ time to reach a safe configuration regardless of the length of its holding time. Consider an execution of any LS-LE protocol starting from a configuration where all agents are leaders.
Then, $n-1$ agents must have at least one interaction before
electing the unique leader. 
However, a simple analysis on the famous \emph{coupon collector's  problem}
yields that this requires $\Omega(\log n)$ time
(\ie $\Omega(n \log n)$ steps) in expectation. 
In addition to time-optimality, 
the proposed protocol has a small space complexity:
The number of states per agent is $O(\para \log n)$,
which is much smaller than $O(\para^2 \log^5 n)$ in \cite{Sud+20}.

The proposed protocol also shows how useful
loose-stabilization is in the PP model. 
When we set $\para = 100$, its expected convergence time
is $O(\log n)$, and the expected holding time is $\Omega(n^{100})$,
practically forever. This protocol needs only the knowledge of $N$
such that $n \le N = O(n^c)$ holds for some constant $c$.
Under self-stabilization,
if we require the same convergence time, the unique solution in the literature \cite{Bur+19} uses the infinite space of each agent and requires a much stronger assumption, \ie the knowledge of exact $n$.

%To bound the number of states of the proposed protocool by $O(\para \log n))$,
%we introduce the new technique that we call mode-switching,
%which dynamically change the role of agents 

\subsection{Related Work} 
Leader election has been extensively 
studied in the PP model. 
When we design non-self-stabilizing 
protocols, we can assume that 
all agents are in a specific state at the 
initial configuration. 
Leader election is then achieved 
by employing a simple protocol \cite{original}. 
In this protocol, all agents are initially 
leaders, 
and we have only one transition rule: 
when two leaders meet, one of them
becomes a follower (\ie a non-leader). 
This simple protocol elects a unique leader 
in linear time
and uses only two states at each agent. 
This protocol is time-optimal:
Doty and Soloveichik \cite{DS18} showed that 
any constant space protocol requires linear 
time to elect a unique leader. 
In a breakthrough result, 
Alistarh and Gelashvili \cite{AG15} 
designed a (non-self-stabilizing) leader election protocol 
that converges in $O(\log^3n)$ 
parallel time and uses $O(\log^3 n)$ states at each 
agent. 
Thereafter, a number of papers have been devoted to 
fast leader election \cite{BCER17, AAG18, 
BKKO18, GS18, GSU18, SOI+18, BGK20}.
%among which current state of the art protocols 
%are given in \cite{GSU18} and \cite{SOI+18}. 
G{\k{a}}sieniec and Staehowiak \cite{GSU18} 
gave an algorithm that converges in 
$O(\log n \log \log n)$ time
and uses a surprisingly small number of states:
only $O(\log \log n)$ states per agent.
This is space-optimal because it is known that
every leader election protocol with poly-logarithmic time
uses $\Omega(\log \log n)$ states \cite{AAE+17}. 
Sudo \etal~\cite{SOI+18} gave
a protocol that elects a unique leader within 
$O(\log n)$ time
and uses $O(\log n)$ states per agent. 
This is time-optimal because
any leader election protocol 
requires $\Omega(\log n)$ time
even if it uses an arbitrarily large number of states 
and the agents know the exact size of the population \cite{SM19}.
\footnote{This lower bound is not trivial:
it does not immediately follows from a simple analysis of the coupon collector's problem
because unlike SS-LE/LS-LE setting, we can now specify an initial configuration
such that all agents are followers.}
These two protocols were the state-of-the-art 
until Berenbrink \etal~\cite{BGK20}
gave a time and space optimal protocol very recently.

%In \cite{SOK+18,without}, 
LS-LE protocols 
are also presented for a population where 
some pairs of agents may not have 
interactions, 
\ie the interaction graph is not complete \cite{SOK+18,without}. 

\section{Preliminaries} 
\label{sec:preliminaries} 

\subsection{Model}
\label{sec:model}
We denote the set of integers 
$\{z \in \mathbb{N}\suchthat x \le z \le y\}$ 
by $[x,y]$.
%and denote the $n^\tH$ harmonic number 
%by $H_n = \sum_{k=1}^n \frac{1}{k}$. 
The omitted bases of logarithms are 2.
%We write the natural logarithm of $x$ as $\ln 
%x$; $\ln$の定義はは必要であればfootnoteでよいと思います． 
%we indicate the base of other logarithms of 
%$x$, such as 
%$\log_2 x$. これは常識なので改めて書かなくてもよいかと．

A \emph{population}  
is the set $V$ of $n$ {\em agents}
(\ie $|V|$ = n) that changes their states by pairwise interactions.
Every pair of agents $(u,v) \in V \times V \setminus \{(w,w) \mid w \in V\}$
can interact with each other.
A \emph{protocol} $P$ on the population is defined by a 4-tuple 
$P = (Q,Y,T,\outputs)$
consisting of
a finite set $Q$ of states, 
a finite set $Y$ of output symbols, 
a transition function 
$T: Q \times Q \to Q \times Q$, 
and an output function $\outputs : Q \to Y$. 
When two agents interact, 
$T$ determines their next states 
based on their current states. 
The output function $\outputs$ maps the current local state $q \in Q$ to a value in the output domain $\outputs(q) \in Y$.
The state of each agent including the current output are often described as a set 
of local variables.
Throughout this paper, we use the notation $v.x$ to denote the 
value of a variable $x$ managed by agent $v$.

%Introductionと整合性を取るために微修正しました．
We assume that all agents have a common knowledge $N$ on $n$ such that $n \le N = O(n^c)$ holds for some constant $c$,
which is equivalent to the assumption that the agents have 
a constant-factor approximation $m$ of $\log n$, \ie  $\alpha \log n \geq m \geq \log n$
for some constant $\alpha \geq 1$\footnote{In this sense, any protocol $P$ should be parametric (with respect to $m$) such as $P_m = (Q_m, Y_m, T_m, \pi_{\mathrm{out}, m})$ strictly. In this paper, we do not explicitly state parameter $m$ of $P$ for simplicity.}.

A \emph{configuration} is a mapping $C : V \to Q$ that specifies 
the states of all agents. 
Given a protocol $P$ on $n$ agents, the set of all possible configurations for $P$
is denoted by $\call(P)$. % $P$の定義にもよるが，$\call(P)$はn$にも依存する．
We say that a configuration $C$ changes to 
$C'$ by an interaction 
$e = (u,v)$, 
denoted by $C \stackrel{P,e}{\to} C'$, 
if 
$(C'(u),C'(v))=T(C(u),C(v))$ 
and $C'(w) = C(w)$ 
for all $w \in V \setminus \{u,v\}$. 
Then $u$ and $v$ are respectively called the \emph{initiator} 
and the \emph{responder} of $e$. Given an interaction $e$, we say that agent $v \in V$ 
\emph{participates} in $e$ if $v$ is either the initiator or the 
responder of $e$.

%A \emph{schedule} $\gamma = \gamma_0,\gamma_1,\dots  =(u_0,v_0),(u_1,v_1),\dots~$ 
%is a (possibly infinite) sequence of interactions happening at each time step $t$. 
%under schedule $\gamma$.
We assume the \emph{uniformly random scheduler} $\rs$, which
selects two agents to interact at each step uniformly at random from all pairs of agents.
Specifically, $\rs=\Gamma_0, \Gamma_1,\dots$
where each $\Gamma_t \in E$ is a random variable such 
that 
$\Pr(\Gamma_t = (u,v)) =\frac{1}{n(n-1)}$ 
for any $t \ge 0$ and any distinct $u,v \in 
V$. 
%We use capital letter $\rs$ (e.g., $P$) to represent the random schedule not fixed yet,
%in contrast to using lower cases to represent a realized schedule.
Given an initial configuration $C_0 \in \call(P)$,
the \emph{execution} of protocol $P$ under the uniformly random scheduler $\rs$
is defined as $\Xi_{P}(C_0,\rs) = C_0,C_1,\dots$ 
where $C_t \stackrel{P,\Gamma_t}{\to} C_{t+1}$ holds for all $t \ge 0$. 
Note that each $C_i$ is also a random variable.

%この段落もちょっと保留させてください．
%5節でコンパイラの議論をする際にリーダ選挙問題だけでなく
%一般の問題について緩安定を定義する必要があるかもしれないと考えています．
%ページ制限的に，あまりフォーマルな議論をするつもりはないので，
%ここではリーダ選挙のみを扱う形にするかもしれません．（首藤）
\subsection{Loosely-Stabilizing Leader Election}
In leader election protocols, every agent is equipped with an output variable 
$\leader \in \{\fl, \tr\}$, which indicates whether the agent is a leader. That is, if $v.\leader \{\fl,\tr\}$ holds, $v$ is a leader, and a follower otherwise.
%means 
%``leader'' or ``follower'' respectively. 
A configuration $C$ is called {\em correct with leader} $v \in V$ 
if $v$ outputs $\tr$ and all other agents output $\fl$. 
%Simply, we also say $C$ is safe if there exists $v \in V$
%such that $C$ is safe with respect to $v$.
Given any configuration $C$, 
we define $\eht{P}{C}{\sle}$ %　これはリーダ選挙問題のみに対して定まる定義なので，引数lEは不要
as the expected length of the longest prefix of $\Xi_P(C,\rs)$, where any configuration is correct with 
a common leader $v \in V$. Note that $\eht{P}{C}{\sle}=0$ holds if 
a configuration $C$ is not correct. For any configuration $C$ and any subset $\mathcal{S} \subseteq \call(P)$ 
of configurations, we also define 
$\ect{P}{C}{\mathcal{S}}$ as the expected length of the longest prefix of $\Xi_P(C,\rs)$, where any 
configuration is not in $\mathcal{S}$
%The notation  ここの文言はfootnoteでよいかと．
%$\mathrm{EIH}$ (resp.~$\mathrm{EIC}$) 
%stands for 
%the Expected number of Interactions to Hold 
%(resp.~Converge). 
%We give the formal definitions of 　上の文言で十分formalだと思います．
%$\mathrm{EIH}$ and $\mathrm{EIC}$ in the appendix.

\begin{definition}[Loosely-stabilizing leader election \cite{kanjiko}] 
Let $\alpha$ and $\beta$ be positive real numbers. 
Protocol $P(Q,Y,T,\outputs)$ is 
an $(\alpha,\beta)$-loosely-stabilizing 
leader election protocol 
if there exists a set $\mathcal{S}$ of 
configurations 
satisfying the two inequalities
\begin{align*} 
\max_{C \in \mathcal{C}_\mathrm{all}(P)} 
\ect{P}{C}{\mathcal{S}} \le \alpha 
~~~\text{and}~~~ 
\min_{C \in \mathcal{S}} \eht{P}{C}{\sle} \ge 
\beta. 
\end{align*} 
\end{definition} 

We call $\mathcal{S}$ defined by the definition above the set of \emph{safe} configurations of $P$.
Note that the condition $\beta > 0$ guarantees the correctness ({\ie} uniqueness of leader) of configurations in $\mathcal{S}$. 
In terms of parallel time, 
an $(\alpha,\beta)$-loosely-stabilizing
leader election 
protocol $P$ reaches a safe configuration 
within $\alpha/n$ parallel time in 
expectation, and it keeps the elected leader during the following $\beta/n$ 
parallel time in expectation.  
We call $\alpha/n$ and $\beta/n$ 
\emph{the expected convergence time} 
and \emph{the expected holding time} of 
$P$, respectively. 

\section{Toolbox}

\subsection{Epidemic}
\label{sec:epidemic}
The protocol \emph{epidemic} \cite{fast}, denoted by $\pep$,
is often used to propagate the maximum value of a variable to the whole population, which is defined 
as:
(i) each agent has only one variable $x$,
and (ii) when two agents $u$ and $v$ interact,
they substitute $\max(u.x,v.x)$ for their variables (\ie $u.x$ and $v.x$).
Then, we have the following lemma.
% local variableを $v.x$のように参照する記法はどこかで定義が必要．

\begin{lemma}[\cite{fast} \footnotemark{}]
\label{lemma:epidemic}
Let $k$ be any non-negative integer, $D_0 \in \call(\pep)$ be any configuration of $\pep$, and $l=\max_{v \in V} v.x$ in configuration $D_0$. The execution $\Xi_{\pep}(D_0,\rs)$ reaches the configuration such that $u.x = l$ holds for 
any $u \in V$ within $O(kn \log n)$ steps with probability $1-O(n^{-k})$.
\end{lemma}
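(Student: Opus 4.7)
The plan is to analyze the size of the ``infected'' set $I_t := |\{v \in V : v.x = l\}|$. Since each interaction of $\pep$ replaces the two participants' $x$-values by their pairwise maximum, and $l$ is the global maximum throughout the execution, the infected set only grows along $\Xi_{\pep}(D_0,\rs)$; in particular, $I_t$ is non-decreasing with $I_0 \ge 1$. The target quantity is $T^* := \min\{t : I_t = n\}$, and the claim to prove is $\Pr[T^* \le C k n \log n] \ge 1 - O(n^{-k})$ for a sufficiently large absolute constant $C$.

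A step $\Gamma_t$ increments $I_t$ precisely when its ordered endpoints have different infection status; there are $2 I_t (n - I_t)$ such ordered pairs out of $n(n-1)$, so
\begin{equation*}
p_i \;:=\; \Pr[I_{t+1} = i + 1 \mid I_t = i] \;=\; \frac{2i(n-i)}{n(n-1)}, \qquad i \in [1, n-1].
\end{equation*}
Hence $T^* = \sum_{i=1}^{n-1} T_i$, where the $T_i$ are independent geometric variables with parameters $p_i$. The partial fraction $\frac{1}{i(n-i)} = \frac{1}{n}\left(\frac{1}{i} + \frac{1}{n-i}\right)$ then gives immediately $\ex[T^*] = (n-1)H_{n-1} = O(n \log n)$, already establishing the bound in expectation.

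For the high-probability statement I would apply the Chernoff--Cram\'er method directly to $T^*$. Fix a small constant $c > 0$ and set $\lambda := c/n$, chosen so that $\lambda \le p_i/4$ holds for every $i$ (recall $\min_i p_i = p_1 = 2/(n-1)$). Taylor-expanding $\ln \ex[e^{\lambda T_i}] = \ln\!\bigl(p_i e^\lambda/(1 - (1-p_i)e^\lambda)\bigr)$ in the small ratio $\lambda/p_i$ gives the uniform cumulant bound $\ln \ex[e^{\lambda T_i}] \le 2\lambda/p_i$. By independence,
\begin{equation*}
\ln \ex[e^{\lambda T^*}] \;\le\; 2\lambda \sum_{i=1}^{n-1}\frac{1}{p_i} \;=\; 2\lambda\cdot \ex[T^*] \;=\; O(c \log n),
\end{equation*}
so $\ex[e^{\lambda T^*}] \le n^{O(c)}$. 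Markov's inequality then yields $\Pr[T^* > t] \le n^{O(c)} e^{-\lambda t}$; substituting $t = Ckn\log n$ gives an exponent of $O(c) - \Omega(cCk)$ in base $n$, which is $\le -k$ once the product $cC$ is chosen to be a sufficiently large absolute constant.

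The delicate part is the simultaneous calibration of $\lambda$. At the slow endpoints $i = 1, n-1$ the parameter $p_i$ is $\Theta(1/n)$, so $\lambda$ cannot exceed $O(1/n)$ without blowing up the individual MGFs; on the other hand, $\lambda$ must be $\Omega(1/n)$ for the tilt $\lambda t$ to produce the $\log n$ factor that the target bound needs in the exponent. Thus $\lambda = \Theta(1/n)$ is forced. The real content of the argument is that this single scale of $\lambda$ keeps $\prod_i \ex[e^{\lambda T_i}]$ merely polynomial in $n$ rather than exponential---precisely what the estimate $2\lambda \cdot \ex[T^*] = O(\log n)$ captures---so that the independence of the $T_i$ beats the fat-tailed behavior of the endpoint stages $T_1$ and $T_{n-1}$ and prevents an additional $\log n$ factor that a naive per-stage union bound would incur.
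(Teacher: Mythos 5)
Your argument is correct, and it is worth noting that the paper itself offers no proof to compare against: Lemma~\ref{lemma:epidemic} is imported verbatim from \cite{fast} (with the two-way variant justified via \cite{kanjiko}). Your route is the clean ``sum of independent geometrics'' derivation: the reduction to $T^* = \sum_i T_i$ with $p_i = 2i(n-i)/(n(n-1))$ is exactly right (an interaction can only infect the uninfected endpoint of a mixed pair, so $I_t$ increases by at most one per step and the per-level waiting times are independent geometrics by the strong Markov property), the partial-fraction computation giving $\ex[T^*]=(n-1)H_{n-1}$ is correct, and the single exponential tilt at scale $\lambda=\Theta(1/n)$ with the cumulant bound $\ln \ex[e^{\lambda T_i}] = O(\lambda/p_i)$ does yield $\ex[e^{\lambda T^*}] \le n^{O(1)}$ and hence the $1-O(n^{-k})$ tail at $t = O(kn\log n)$. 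This is arguably tighter in spirit than the analysis in \cite{fast}, which handles the one-way epidemic by splitting the growth of the infected set into stages and union-bounding concentration estimates over them; your global MGF computation absorbs the slow endpoint stages $i\in\{1,n-1\}$ without paying an extra logarithmic factor, as you correctly emphasize. Two trivial slips that do not affect anything: $\min_i p_i = p_1 = 2/n$, not $2/(n-1)$; and if $I_0 > 1$ then $T^*$ is a sum over $i \ge I_0$ only, which is stochastically dominated by the full sum, so the worst case is $I_0=1$ as you implicitly assume. Also, the advertised constant $2$ in $\ln\ex[e^{\lambda T_i}]\le 2\lambda/p_i$ requires a line of verification (a safe bound under $\lambda \le p_i/4$ is $\lambda - \ln(1-2\lambda/p_i) \le 5\lambda/p_i$), but only the $O(\lambda/p_i)$ order matters for the conclusion.
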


\footnotetext{
While the original protocol by Angluin \etal~\cite{fast} is an one-way version of $\pep$ (\ie higher value is propagated only from an initiator to a responder), there is no difference on asymptotic propagation time between them (Lemma 8 in \cite{kanjiko}).
}

\subsection{Countdown with Higher Value Propagation}
\label{sec:countdown}
%When one design a loosely-stabilizing protocol,
The protocol of
\emph{counting down with higher value propagation} (CHVP)
\cite{kanjiko} is a useful technique to design loosely-stabilizing protocols, particularly for detecting 
the absence of a leader. It is defined as the following protocol $\pcd$: each agent has only one variable $y$,
and when two agents $u$ and $v$ interact,
they substitute $\max(u.y-1,v.y-1,0)$ for their $y$. We have the following two lemmas.
\begin{lemma}[Lemma 1 in \cite{ADK+17}\footnote{
Precisely, $k$ is assumed to be a constant in the original lemma, but the same proof applies in the case that $k$ depends on $n$.}]
\label{lemma:countdown} 
Let $l_1$ and $l_2$ be any two integers
such that $l_1 > l_2 \ge 0$,
$k$ be any non-negative integer,
and $D_0 \in \call(\pcd)$ be any configuration of $\pcd$ such that $l_1=\max_{v \in V} v.y$ holds. 
The execution $\Xi_{P}(D_0,\rs)$ reaches a configuration satisfying $\max_{v \in V} v.y \le l_2$
within $O(n(l_1-l_2 + k\log n))$ steps with probability $1-O(n^{-k})$.
\end{lemma}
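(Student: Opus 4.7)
My plan is to combine the monotonicity of the running maximum $M_t := \max_{v \in V} v.y(t)$ with a careful coupon-collector-style analysis of successive ``phases'' of the execution. The starting observation is that each interaction replaces both participants' values with $\max(u.y, v.y) - 1$, which is strictly smaller than the prior maximum of the two. Hence $M_t$ is non-increasing, and $M_t$ drops from $L$ to $L - 1$ precisely when every agent currently at value $L$ has participated in at least one interaction (each such interaction sets that agent to at most $L-1$, and no agent is ever raised above the current $M_t$).

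I would then partition the execution into phases, where phase $j$ (for $j = 1, \ldots, l_1 - l_2$) is the interval during which $M_t = l_1 - j + 1$, and write $\kappa_j$ for the number of agents at value $l_1 - j + 1$ at the start of phase $j$. Since phase $j$ ends exactly when these $\kappa_j$ distinguished agents have all participated at least once, a coupon-collector-style Chernoff bound yields a phase-$j$ length of $O(n(\log \kappa_j + k \log n))$ with probability $1 - O(n^{-k-2})$. Union-bounding over the $l_1 - l_2$ phases then gives a total of $O(n \sum_j \log \kappa_j + n(l_1 - l_2) k \log n)$ steps with probability $1 - O(n^{-k})$.

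The crux is the bound $\sum_j \log \kappa_j = O((l_1 - l_2) + \log n)$. For this I would argue that $\kappa_{j+1}$ concentrates around $O(\log \kappa_j)$: the only agents ending phase $j$ at the new max value $l_1 - j$ are (i) agents in the current max set that interacted exactly once during the phase, and (ii) agents whose most recent interaction during phase $j$ had a partner at the current max value. A Poisson approximation of interaction times yields $O(\log \kappa_j)$ in expectation for each source, and a Chernoff concentration bound extends this to a high-probability statement. The resulting iterated-logarithm decay of $(\kappa_j)_j$ means $\sum_j \log \kappa_j$ is dominated by $\log \kappa_1 + \log \kappa_2 + \cdots = O(\log n)$ over the first $O(\log^* n)$ phases, plus $O(1)$ per subsequent phase, summing to $O((l_1 - l_2) + \log n)$.

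The principal obstacle I expect is converting the expectation bound $E[\kappa_{j+1}] = O(\log \kappa_j)$ into a high-probability statement of the same form. The Poisson approximation requires controlling the dependency between the (random) phase length $\tau_j$ and the joint distribution of agents' interaction histories within the phase. A layered conditioning argument---first on $\tau_j$, then on each agent's last-interaction time during the phase, then on that partner's identity---should suffice, but the bookkeeping is delicate, and one needs to be careful that the small-probability deviations do not compound destructively across the $l_1 - l_2$ phases.
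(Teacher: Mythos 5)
The paper does not prove this lemma at all: it imports it verbatim as Lemma~1 of \cite{ADK+17}, adding only a footnote that the cited proof extends to non-constant $k$. The cited argument is a global drift/potential-function argument, not a phase-by-phase one, and your plan has two gaps that are fatal as stated.

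First, your per-phase high-probability accounting cannot produce the additive form $O(n(l_1-l_2+k\log n))$. To drive the failure probability of a \emph{single} phase down to $O(n^{-k-2})$ you must budget $\Omega(nk\log n)$ steps for that phase even when $\kappa_j=1$, since a fixed agent avoids all interactions for $T$ steps with probability $(1-2/n)^T$. Summing over the $l_1-l_2$ phases yields exactly the $n(l_1-l_2)k\log n$ term you write down, which exceeds the target by a multiplicative $k\log n$ factor whenever $l_1-l_2=\omega(1)$; optimizing $\sum_j\log\kappa_j$ cannot repair this, because that term is not the dominant one. The $k\log n$ slack in the lemma is a single global additive term and must be shared across all phases. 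That forces either a concentration bound on the \emph{sum} of the phase lengths (dominating it by a sum of geometric variables with one global deviation budget), or the standard one-shot argument: take $\Phi=\sum_{v\in V}(\gamma^{v.y}-1)$ for a constant $\gamma>2$, check that each interaction decreases $\ex[\Phi]$ by a factor $1-\Omega(1/n)$, and apply Markov's inequality to $\ex[\Phi_T]\le n\gamma^{l_1}e^{-\Omega(T/n)}$ at $T=\Theta(n(l_1-l_2+k\log n))$.

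Second, the recursion $\kappa_{j+1}=O(\log\kappa_j)$ is false for the arbitrary initial configurations the lemma must cover. Your two sources of agents at the new maximum $l_1-j$ omit a third: agents whose value already equals $l_1-j$ \emph{before} phase $j$ begins and who simply never interact during phase $j$. If $D_0$ has one agent at $l_1$ and $n-1$ agents at $l_1-1$, phase $1$ lasts $O(n)$ steps in expectation, during which a constant fraction of the population never interacts, so $\kappa_2=\Theta(n)$ while $\kappa_1=1$. Hence $\kappa_{j+1}$ is not controlled by $\kappa_j$ alone, the iterated-logarithm decay does not hold, and $\sum_j\log\kappa_j=O(l_1-l_2+\log n)$ fails. (Obtaining even the expectation version with failure probability $n^{-k}$ per phase would be a further, separate obstacle.) The structural facts you begin with---the maximum is non-increasing, no agent is ever raised to the current maximum, and a phase ends exactly when every agent initially at the maximum has interacted---are correct and are the right skeleton, but the quantitative core of the argument needs to be replaced along the lines above.
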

\begin{lemma}[Lemma 5 in \cite{Sud+20}\footnote{
%Specifically, letting $\Xi_{\pcd}(D_0,\rs)=D_0,D_1,\cdots$,
%we obtain the fact that there exists a constant $c$ such that
%$\min_{v \in V} v.y \ge l_1-ck \lg n$ holds
%in configuration $D_{2(k+1)n \lceil  \ln n \rceil}$
%with probability $1-O(n^{-k})$
We obtain this lemma
by substituting $d=k+3$, $d'=k+3$, $d''=6$,
and $t = \lceil kn \ln n \rceil$ for the first inequality
in Lemma 5 in \cite{Sud+20}.
}]
\label{lemma:prop} 
Let $D_0 \in \call(\pcd)$ be any configuration and $l$ be the integer that satisfies
$l=\max_{v \in V} v.y$ at $D_0$. There exists a constant $c$ such that
$\Xi_{\pcd}(D_0,\rs)$ reaches a configuration satisfying $\min_{v \in V} v.y \ge l - ck \lg n$ within $O(kn\log n)$ steps with probability $1-O(n^{-k})$.
\end{lemma}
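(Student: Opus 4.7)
The plan is to prove the lemma by coupling the CHVP execution with a standard epidemic process, then tracing the lineage of each agent's value back to an initial max-holder. Let $l = \max_{v \in V} v.y$ at $D_0$, fix an initial max-holder $u_0$, and set $T = O(kn\log n)$.

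First, I would define an infection process coupled to CHVP under the same schedule $\rs$: $u_0$ is initially infected, and any agent becomes infected the first time it interacts with an already-infected one. The infection set evolves under exactly the $\pep$ update rule, so Lemma \ref{lemma:epidemic} guarantees that within $T$ steps every agent is infected with probability $1 - O(n^{-k})$. In parallel, a standard Chernoff bound shows that, with probability $1 - O(n^{-k})$, no agent participates in more than $O(k\log n)$ interactions during $[0,T]$.

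Next, for any agent $v$, consider its infection chain $u_0 = w_0, w_1, \ldots, w_r = v$ with infection times $0 = s_0 < s_1 < \cdots < s_r \le T$. The CHVP update rule gives $w_i.y(s_i) \ge w_{i-1}.y(s_i-1) - 1$ at each infection event, and between consecutive infections $w_{i-1}.y$ can decrease by at most one per interaction of $w_{i-1}$; similarly, $v.y$ decreases by at most one per interaction of $v$ in $(s_r, T]$. Telescoping yields $v.y(T) \ge l - r - \sum_{j=0}^{r} a_j$, where $a_j$ is the number of ``non-passing'' interactions of $w_j$ in the disjoint interval $(s_j, s_{j+1})$ (with $s_{r+1} = T$, so $a_r$ counts interactions of $v$ after its infection). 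Taking expectation, $\mathbb{E}[\sum_j a_j] = O(T/n) = O(k\log n)$, and the chain depth satisfies $r = O(\log n)$ with high probability by standard random-recursive-tree analysis; a careful concentration argument then gives $r + \sum_j a_j = O(k\log n)$ with probability $1 - O(n^{-k})$, so $v.y(T) \ge l - ck\log n$ for some constant $c$. A union bound over all $v$ completes the argument.

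The main obstacle is the concentration of $\sum_j a_j$: naively applying the per-agent bound to each $a_j$ independently and summing would give $O(r \cdot k\log n) = O(k \log^2 n)$, losing a $\log n$ factor. Avoiding this loss requires treating $\sum_j a_j$ as counting the (random) chain-agent's participations over disjoint time intervals that together cover $[0,T]$, rather than as a sum of $r$ independent per-agent budgets. This is the key step where care is needed, because the chain itself depends on the schedule.
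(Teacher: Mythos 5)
A preliminary remark: the paper offers no proof of this lemma at all --- it is imported from Lemma~5 of~\cite{Sud+20} with the parameter substitution recorded in the footnote --- so your attempt must be judged on its own merits rather than against an argument in this paper. Your skeleton is sound: the coupling of the CHVP execution with an infection process started at a max-holder, the per-infection inequality $w_i.y(s_i) \ge w_{i-1}.y(s_i^-)-1$, and the telescoped bound $v.y(T) \ge l - r - \sum_{j} a_j$ are all correct, and they correctly reduce the lemma to showing $r+\sum_j a_j = O(k\log n)$ with probability $1-O(n^{-k-1})$ per agent.

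The problem is that this last bound --- which is the entire content of the lemma --- is asserted rather than proved, and the route you sketch for it does not go through as stated. Viewing $\sum_j a_j$ as ``participations of the active chain agent over disjoint intervals covering $[0,T]$'' does not license a Chernoff bound, because the identity of the active chain agent at step $t$ is not measurable with respect to the history up to step $t$: which currently infected agent will turn out to be $v$'s ancestor $w_j$ is revealed only at the later time $s_{j+1}$, so the summands are indicators of events selected by the future schedule, not a predictable sequence of Bernoulli trials. You flag exactly this (``the chain itself depends on the schedule'') and then stop, so the proposal halts precisely where the work begins; note also that your depth bound must be pushed to probability $1-O(n^{-k-1})$ (yielding $r=O(k\log n)$, not $O(\log n)$) to survive the union bound over $v$. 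Two ways to close the gap. (i) Condition on the entire infection forest together with all infection times: this determines every chain and every window, the non-infection steps become independent and each involves any fixed infected agent with probability $O(1/n)$, and the interactions forced by the tree (infections performed by $w_j$ inside its window) are controlled by the $O(k\log n)$ bound on the maximum degree of a random recursive tree. (ii) More cleanly, drop the chain decomposition: assign $D_{u_0}=0$, set $D_w \gets D_{\mathrm{parent}(w)}+1$ at infection and $D_w\gets D_w+1$ at every subsequent interaction of $w$, check $l-w.y \le D_w$ by induction, and verify that $\Phi_t=\sum_{w\ \mathrm{infected}} e^{D_w(t)}$ satisfies $\mathbf{E}[\Phi_{t+1}\mid \Phi_t]\le \bigl(1+\tfrac{2(2e-1)}{n}\bigr)\Phi_t$; Markov's inequality then gives $\max_w D_w(T)=O(kT/(nk))\cdot O(1)+O(k\log n)=O(k\log n)$ with probability $1-n^{-k}$ in a few lines, with no measurability issues and no separate depth bound. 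Without one of these (or an equivalent), the proposal is not yet a proof.
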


\subsection{Lottery Game and Quick Elimination}
\label{sec:game}
The \emph{lottery game}, originally introduced by Alistarh \etal~\cite{AAE+17} as 
a part of their leader election protocol, is a probabilistic
process of filtering leaders. An abstract form of the lottery game is stated 
as follows: Let $V'$ be the set of leaders. Every leader $v \in V'$ makes independent fair coin flips until it observes tail for the first time. Then, the number of observed heads $s_v$ (called the \emph{level} of $v$) is propagated to other leaders. The agent identifying another agent with a higher level drops out as a loser. 

There are a few implementations of the lottery game in population 
protocol models. Alistarh \etal~\cite{AAE+17} and Sudo \etal~\cite{SOI+19+} develop (non-loosely-stabilizing)
leader election protocols, based
on their own implementations and analyses for this game. In this paper, we adopt the
implementation shown in~\cite{SOI+19+}, called \emph{quick elimination} (QE). 
%The proposed implementation
%is an extremely simpler version of that in \cite{SOI+19}. 
The pseudocode of QE is given in
Algorithm~\ref{al:quick}, which describes the state transition when two agents $a_0$ 
and $a_1$ interact.
Since the propagation of level values is easily implemented 
by the epidemic, the main non-trivial point is
how to synthesize coin flips using the randomness of the scheduler.
The implementation QE simply utilizes the asymmetry of interactions. That is, if 
$v$ joins an interaction as the initiator, it receives head as the result of 
its coin flip, and receives tail if it joins as the responder. Each agent maintains 
two variables, $\done$ and $\level$, in addition to an output variable $\leader$. The 
flag $\done \in \{\fl, \tr\} $ implies whether the agent is still in the decision 
of its level (\ie, it continues (synthetic) coin flips during $\done = \fl$).
%The variable $\level$ stores the length of the created sequence.
%Note that each agent does not need to store the sequence itself.
Starting from the state with $\done = \fl$ and $\level = 0$,
the agent $v$ with $v.\leader = \tr$ first decides its level:
it increments $v.\level$ every time it observes head,
and it stops incrementation and sets $\done$ to $\tr$ when it observes tail for the first time.
Agents that have decided their levels perform the epidemic to share the maximum level (lines 5-7). If an agent sees a higher level, it becomes a follower (line 6).

While the lottery game was used as a scheme to eliminate leaders in the past literature, we rather see it as a Monte Carlo protocol for leader election, \ie, we focus on the probability that exactly one player wins (or survives as a leader). The following lemma is the key ingredient of our protocol, which is simple but a new observation that has not been addressed so far.

\begin{algorithm}[t]
\caption{$\bfquick()$}
\label{al:quick}
%\flushleft
\begin{algorithmic}[1]  
 \IF{$\exists i \in \{0,1\} : a_i.\done = \fl \wedge a_i.\leader = \tr$}
 \STATE \ifthen{$i=0$}{$a_0.\level \gets \min(a_0.\level+1,2m)$}
 \STATE \ifthen{$i=1$}{$a_1.\done \gets \tr$}
 \ENDIF
\IF{$a_0.\done = \tr \wedge a_1.\done = \tr \wedge \exists i \in \{0,1\}: a_i.\level < a_{1-i}.\level$}
 \STATE $a_i.\leader \gets \fl$
 \STATE $a_i.\level \gets a_{1-i}.\level$
\ENDIF
\end{algorithmic}
\end{algorithm}

%\lognを"\lceil \log n \rceil"とプリアンブルで定義しています．

\begin{lemma}
\label{lemma:game}
Consider the execution of $\quick$ under the uniformly random scheduler $\rs$ starting from a configuration
where at least one leader exists and $\done = 0$ and $\level = 0$ hold for any agent.
When all leaders finish deciding their levels (\ie $\done = 1$ holds for all leaders),
exactly one leader has the maximum level ($\max_{v \in V} v.\level$) with probability at least $1/16$.
\end{lemma}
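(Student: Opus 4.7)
Let $k \ge 1$ denote the number of leaders in the initial configuration; the case $k = 1$ is immediate, so I assume $k \ge 2$. The plan is to translate the statement into a question about the random variables $L_v$ giving each initial leader's level at the moment it first becomes done, and then to lower-bound the probability that $\max_v L_v$ is attained uniquely.

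First I would fix the marginal distribution of each $L_v$. Whenever a leader $v$ with $v.\done = \fl$ takes part in an interaction, the scheduler makes it the initiator or the responder with probability $1/2$ each; in the former case line~2 increments $v.\level$ and in the latter case line~3 sets $v.\done \gets \tr$. Hence $L_v$ is marginally distributed as $\min(G_v, 2m)$ with $G_v \sim \mathrm{Geo}(1/2)$, and since $m \ge \log n$ while only levels $\le \lfloor \log_2 k \rfloor + 1 \le m$ are needed below, the cap $2m$ never matters. Letting $\tau$ be the first step at which no current leader has $\done = \fl$ and writing $L^* := \max_v L_v$, a short case analysis of lines~5--7 shows that (i) $\max_{v \in V} v.\level = L^*$ at time $\tau$, (ii) any $v$ that is still a leader at $\tau$ has $v.\level = L_v$, because line~7 fires only together with line~6 which demotes $v$, and (iii) any $v$ with $L_v = L^*$ is still a leader at $\tau$, because its level never strictly falls below another agent's level in lines~5--7. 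Consequently the event in the lemma reduces to $|\{v : L_v = L^*\}| = 1$.

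To lower-bound the probability of that event, set $i^* := \lfloor \log_2 k \rfloor$, $\lambda := k \cdot 2^{-i^*} \in [1,2)$, and $S_i := |\{v : L_v \ge i\}|$. The event $E := \{S_{i^*} = 1,\ S_{i^*+1} = 0\}$ clearly forces a unique maximum. Under i.i.d.\ geometric marginals,
\[
\Pr[E] \;=\; \tfrac{1}{2}\, \lambda\, (1 - \lambda/k)^{k-1} \;\ge\; \frac{\lambda}{2 \cdot 4^{\lambda(1-1/k)}} \;\ge\; \frac{1}{16},
\]
where the first inequality uses $(1 - p)^{1/p} \ge 1/4$ on $p \in (0, 1/2]$ and the second reduces to $8\lambda \ge 4^{\lambda(1-1/k)}$, which holds on $\lambda \in [1, 2)$, $k \ge 2$ (with equality only approached in the boundary $\lambda \to 2$, $k \to \infty$).

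The main obstacle is that the $L_v$'s are not literally independent: when two undone leaders meet in a ``shared'' interaction, one necessarily gets a head while the other gets a tail, so their coin flips are anti-correlated. I would close this gap with a coupling argument exploiting that shared interactions can only \emph{break} potential ties---if two undone leaders currently hold the same level, a shared interaction deterministically separates them (one strictly increments its level while the other freezes)---so that the probability of $E$ in the protocol is at least its i.i.d.\ counterpart. Making this coupling rigorous without getting bogged down in the precise order in which leaders become done, and in the mix of shared versus non-shared interactions, is where the technical work will lie.
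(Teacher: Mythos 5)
Your reduction to the event $|\{v : L_v = L^*\}| = 1$ and the i.i.d.\ computation of $\Pr[E]=\tfrac{\lambda}{2}(1-\lambda/k)^{k-1}\ge 1/16$ are both correct, but the step you defer---that the protocol's correlated coin flips stochastically dominate the i.i.d.\ model with respect to $E$---is the entire difficulty, and the justification you sketch for it does not hold up. Your argument covers only the case where two undone leaders at the \emph{same} level share an interaction. Consider instead an undone leader $u$ at level $i^*$ meeting an undone leader $v$ at level $i^*-1$: if $u$ is the initiator, $u$ climbs to $i^*+1$ and $S_{i^*+1}=0$ fails; if $v$ is the initiator, $v$ reaches level $i^*$ while $u$ freezes at $i^*$, so $S_{i^*}\ge 2$ and $E$ fails. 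Thus this shared interaction kills $E$ with certainty, whereas two independent flips would preserve the locally favorable outcome (both tails) with probability $1/4$. So ``anti-correlation can only help'' is false pointwise, and the domination $\Pr_{\mathrm{protocol}}[E]\ge\Pr_{\mathrm{iid}}[E]$, even if true in aggregate, is an open claim in your write-up, not a routine technicality. Note also that you cannot patch this with a crude ``each pairwise conditional probability loses at most a factor $2$'' bound: at your threshold $i^*=\lfloor\log_2 k\rfloor$ that factor roughly squares $(1-\lambda/k)^{k-1}$, and $\tfrac{\lambda}{2}4^{-2\lambda}$ drops below $1/16$ for $\lambda$ near $2$.

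The paper avoids any coupling by working at a much higher threshold. It fixes a leader $u$, bounds $\Pr(X_u\ge\logn+2)>1/8n$, and controls the dependence with one combinatorial observation: two undone leaders share at most one interaction before one of them becomes done, so conditioning on $u$'s first $\logn+2$ flips perturbs at most one flip of each other leader $v$, giving $\Pr(X_v\ge\logn+2\mid X_u\ge\logn+2)\le 2^{-(\logn+1)}\le 1/2n$. A union bound over the $\le n$ other leaders then leaves a conditional probability $\ge 1/2$ that $u$ is the unique maximizer, and summing the resulting disjoint events $p_u\ge 1/16n$ over the leaders yields the constant. The factor-$2$ loss is harmless there precisely because the threshold is $\logn+2$ rather than $\log_2 k$. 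If you want to salvage your route, you would need to either prove the domination rigorously (e.g., via a careful interaction-by-interaction coupling that accounts for the adjacent-level case above) or move your threshold up and absorb the dependence the way the paper does.
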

\begin{proof}
Let $V'$ be the set of leaders at the initial configuration.
Let $X_v$ be the level computed by agent $v \in V'$.
We show that
$p_u=\Pr(X_u \ge \logn + 2 \wedge \bigwedge_{v \in V' \setminus \{u\}} X_v 
\leq \logn + 1) \geq 1/16n$
holds for any $u \in V'$. Then, the probability that some agent becomes the unique 
winner is obviously lower bounded by $\sum_{u \in V'} p_u \geq 1/16$.
Thus, the lemma holds because when $\done=1$ holds for all leaders,
every agent except for the unique winner has a smaller level or must have become a follower before. 
Since $\Pr(X_u \ge \logn + 2) > 1/8n$ holds,
it suffices to show $q = \Pr(\bigwedge_{v \in V' \setminus \{u\}} X_v \leq \logn +1 \mid X_u \ge \logn + 2) \geq 1/2$.
By the union bound, we have 
$q %= 1 - \Pr[(\bigvee_{v \in V' \setminus \{u\}} X_v > \log n + 2) \mid (X_u = \log n + 3)]
= 1 - \sum_{v \in V' \setminus \{u\}} \Pr(X_v \ge \logn + 2 \mid X_u \ge \logn+ 2)$.
When two agents $u$ and $v$ interact with each other, one of them necessarily reaches the decision of its level. That is, $u$ and $v$ have at most one common interaction until either one decides its level. This implies that to obtain $X_v \ge \logn  + 2$ under the condition $X_u \ge \logn +2$, $v$ must observe at least $\logn + 1$ heads at the coin flips independently of the first $\logn + 2$ coin flips by $u$. That is, we have $\Pr(X_v \ge \logn + 2 \mid X_u \ge \logn+ 2) \leq (1/2)^{\logn + 1} \le 1/2n$, and thus,
$q \geq  1 - n \cdot (1/2n) \geq 1/2$. 
\end{proof}

\section{Time-optimal Loosely-stabilizing Leader Election}
\label{sec:plog}
%\begin{figure}[t]
%  \centering
%  \includegraphics[width=0.6\linewidth]{overview.pdf}
%  \caption{The overview of $\popt$}
%  \label{fig:overview}
%\end{figure}

In this section, we give a loosely-stabilizing leader election
protocol $\popt(\para)$,
where the integer $\para \ge 1$ is a design parameter controlling the performance of the protocol. Starting from any initial configuration, this protocol
reaches a safe configuration within $O(\para n\log n)$ steps
and keeps the single leader in the following $\Omega(n^{\para})$ steps.
The number of states per agent is $\Theta(\para m)=\Theta(\para \log n)$.
In the rest of this paper,
we use terminologies ``\emph{with high probability}''
to mean ``with probability $1-O(1/n)$''
and ``\emph{with very high probability}''
to mean ``with probability $1-O(1/n^{\para})$''.
Further, the terminology ``quickly'' is used for 
implying ``within $O(\para n \log n)$ steps''.

\subsection{Protocol in a Nutshell}
\label{sec:overview}
%The overview of $\popt(\para)$ is illustrated
%in Figure \ref{fig:overview}.
%At an initial configuration, there may be no leader or more than one leader.
The protocol $\popt(\para)$ elects a unique leader by iteratively performing the following two phases, 
both taking $\Theta(\para n \log n)$ steps with very high probability.
\begin{itemize}
 \item \textbf{Check phase}:
The protocol checks whether the population has
at least one leader. Each leader agent propagates a heartbeat message to all others using the epidemics. The agents not receiving that message until the end of the phase
conclude that the population has no leader, and they become leaders.
Since two or more agents may become leaders, they are filtered in
the election phase.
 \item \textbf{Election phase}:
Each agent performs QE. As shown in Lemma~\ref{lemma:game}, this phase 
decreases the number of leader agents to one with a constant probability.
\end{itemize}
There are two major issues for implementing these phases: how to realize
a loosely-stabilizing synchronization mechanism
to yield the transition between two phases, and how to combine it 
with the task of each phase using only a small number of states. The protocol CHVP, 
stated in Section~\ref{sec:countdown}, is one of the possible solutions for the first issue, which provides a loosely-stabilizing (synchronized) timeout mechanism; thus, it can be utilized for global phase synchronization.
However, addressing the second issue is, however, more challenging. Since
the check phase only consumes a constant number of states, it is easily combined 
with CHVP. In the election phase, both QE and CHVP internally 
keep a variable of a non-constant size. The former manages a variable $\level \in [0,2m]$, and the latter manages a variable whose range is $[0,O(\para m)]$, as 
we will see in Section \ref{sec:variables}. Thus, to bound the number of states 
by $O(\para m)=O(\para \log n)$ in total, they must share a single non-constant variable.

We resolve this matter by designing a new loosely-stabilizing task sharing scheme
called $\emph{mode switching}$.
Unlike the task-sharing techniques in the past literature~\cite{GSU18,SOI+19+}, 
it \emph{dynamically} changes the mode of each agent during the election phase. 
The two modes respectively correspond to synchronization and QE,
and each agent is engaged in the task associated with its own mode.
In total, 
the protocol is equipped with three different roles of agents, \ie check phase, synchronization in election phase, and QE in election phase. We call each role a \emph{class} of agents, and they are respectively 
referred to as \emph{checker}, \emph{synchronizer}, and \emph{elector}.
It should be noted that dynamic mode change is crucial for attaining loose 
stabilization: 
A non-correct initial configuration filled by electors obviously causes a 
deadlock because the timeout of the election phase never occurs forever.
Thus, it is indispensable to install a mechanism that changes electors to synchronizers. That mechanism, however, prevents the quick propagation of the maximum level in QE owing to the lack of a sufficiently large number of electors (recall that even agents not involved in the lottery game must work as a medium in the epidemic).
In fact, if only $o(n)$ electors remain, we cannot guarantee with very high probability that the epidemic of the maximum level finishes quickly. This observation implies that the mode change from synchronizers to electors is also necessary.

The remaining concern is how to design synchronization and QE with adapting to
dynamic change. The task of QE is robust for such
dynamics if an agent with mode change always joins as a follower. 
However, CHVP is not robust because the countdown timer is rewound by a newly joining 
agent with a high counter value. Fortunately, we can obtain an alternative solution 
for this matter: simply using a local countdown timer, which just counts the number of interactions performed
by the timer holder. While CHVP is necessary to recover global synchronization from the highly deviated situations where two agents are in different phases or have two counter values with a large difference, we can delegate such a role entirely to the check phase.
Then, the election phase can use the timeout mechanism not necessarily synchronized among all agents.

% 上の文章は適切に理解できている？ value -> constantのほうが適切? 
%
% valueであっています．constantには限定していません．（首藤）

%In the rest of this section, 
%we first give the proposed protocol $\popt$ in detail.
%Specifically,
%%we describe the module $\quick()$ in Section \ref{sec:quick},
%%how to synchronize the population in Section \ref{sec:sync},
%%how to check the absense of a leader
%%in the checking phase in Sectoin \ref{sec:check},
%%and how to implement these mechanism
%%with $O(\log n)$ states per agent
%%in Section \ref{sec:implement}.
%we describe the module $\quick()$ in Section \ref{sec:quick},
%and how to synchronize the population and
%check the absence of a leader in the checking phase,
%and implement these mechanism with $O(\log n)$ states in Section \ref{sec:whole}.
%Next, we analyze the converence time and the holding time
%of $\popt$ in Section \ref{sec:analysis}.

\subsection{Variables and Groups}
\label{sec:variables}

\begin{table}[t]
\caption{Variables used in protocol $\popt$}
\label{tbl:variables}
\center
\begin{tabular}{|l|l|l|}
 \hline
 & Variable name & Initial value\\ \hline 
 \multirow{3}{*}{Common variables}
 &  $\leader \in \{\fl,\tr\}$ & - \\ \cline{2-3}
 &  $\phase \in \{\checking,\elect\}$ & - \\ \cline{2-3}
 &   $\mode \in \{\cand,\timer\}$ & - \\ \cline{2-3} \hline

  \multirow{2}{*}{Variables for checkers}
 & $\rtimer \in [0,\rmax]$ & $\rmax$ \\ \cline{2-3}
 & $\detect \in \{\fl,\tr\}$ & $\leader$ \\ \hline

 \multirow{2}{*}{Variables for electors}
 & $\level \in [0,2m]$ & 0 \\ \cline{2-3}
 & $\done \in \{\fl,\tr\}$ & \fl \\ \hline

 Variables for synchronizers
 & $\btimer \in [0,\bmax]$ & $\bmax$\\ \hline

\end{tabular}
\end{table}

\begin{table}[t]
\caption{Descriptors for specific subsets of agents}
\label{tbl:notation}
\center
\begin{tabular}{l l}
\hline
$\vl= \{v \in V \mid v.\leader = \tr\}$ &$\vf= \{v \in V \mid v.\leader = \fl\}$ \\
$\vcheck= \{v \in V \mid v.\phase = \checking\}$ &$\velect= \{v \in V \mid v.\phase = \elect\}$ \\
$\vcand= \{v \in \velect \mid v.\mode = \cand\}$ &$\vtimer= \{v \in \velect \mid v.\mode = \timer\}$ \\
$\vclarge = \{v \in \vcheck \mid \rmid \le v.\rtimer \le \rmax\}$
& $\vcsmall = \{ v \in \vcheck \mid 0 \le v.\rtimer < \rmid\}$\\
$\vdone = \{v \in \vl \cap \vcand \mid v.\done = 1\}$
& $\vundone = \{v \in \vl \cap \vcand \mid v.\done = 0\}$\\
\hline
\end{tabular}
\end{table}

For describing the protocol, we use two (hard-coded) fixed values, $\rmax$ and $\bmax$, both of which are $\Theta(\para m)=\Theta(\para \log n)$ for sufficiently large hidden constants. We also define 
$\rmid = c \rmax$ for an appropriate $1 > c > 0$ such that $c / (1 - c)$ becomes sufficiently large.
All the hidden constants are appropriately fixed in the ``on-demand'' manner in the proof details. We also assume $n \geq 3$
for the simplicity of argument; however, it is not essential. %The case of $n = 2$ is trivial.
It can be easily observed that this protocol is a self-stabilizing leader election protocol in the case of $n=2$.
%(formally addressed in Lemma \ref{XXXXX} of the appendix).

The set of variables used in protocol $\popt$ is shown in Table~\ref{tbl:variables}.
As stated in Section~\ref{sec:overview}, in protocol $\popt$, there are three classes of agents: checkers, synchronizers, and electors. Each class has a set of variables specific for the associated task, and an agent manages the variables related to its own class as well as the set of common variables. Note that the list of variables in Table~\ref{tbl:variables} contains two $\Theta(\para \log n)$-state variables ($\rtimer$ and $\btimer$) and one $\Theta(\log n)$-state variable ($\level$), but they are used exclusively. That is, at any configuration, each agent has the responsibility of managing only one of the 
three. Thus, the total number of states necessary for storing all variables in Table~\ref{tbl:variables} is bounded by $O(\para \log n)$. The column ``Initial value'' in Table~\ref{tbl:variables} indicates the initial values set to 
class-specific variables. The initialization occurs when the agent changes its class.
For avoiding unnecessary complication, this initialization process is not explicitly stated in the pseudocode presented later.
The class of each agent is identified by two common variables 
$\phase$ and $\mode$. More precisely, the agent $v$ with $v.\phase = \checking$
is a checker, that with $v.\phase = \elect$ and $v.\mode = \cand$ an elector,
and that with $v.\phase = \elect$ and $v.\mode = \timer$ a synchronizer. The set of agents belonging to each class is denoted by $V_{\checking}$, $V_{\cand}$, and $V_{\timer}$ respectively. In addition, we introduce several notations for describing the set of agents satisfying some condition, as listed in Table~\ref{tbl:notation}.

\begin{algorithm}[t]
\caption{$\bfpopt$}
\label{al:popt}
\textbf{Interaction}  between initiator $a_0$ and responder $a_1$:
\hspace{-2cm}
\begin{algorithmic}[1]
\STATE \fordo{$i \in \{0,1\}: a_i \in \vtimer$}{$a_i.\leader \gets \fl$}
\IF{$a_0,a_1 \in \vcheck$}
\STATE $a_0.\detect \gets a_1.\detect \gets \max(a_0.\detect,a_1.\detect)$
\STATE $a_0.\rtimer \gets a_1.\rtimer \gets \max(a_0.\rtimer-1,a_1.\rtimer-1,0)$
\STATE \ifthen{$a_0.\rtimer = 0$}{$\toelect(0)$, $\toelect(1)$}
\ELSIF{
$\exists i \in \{0,1\}:a_{i} \in \velect \wedge a_{1-i} \in \vclarge$
}
\STATE $a_i.\phase \gets \checking$
\ELSIF{
$\exists i\in\{0,1\}:a_i \in \vcsmall \land a_{1-i} \in \velect$
}
\STATE $\toelect(i)$
\ENDIF
\IF{$a_0,a_1 \in \velect$}
\IF{$a_0,a_1 \in \vcand \cap \vf \wedge a_0.\level = a_1.\level$}
\STATE $a_1.\mode \gets \timer$
\ELSIF{$a_0,a_1 \in \vtimer$}
\STATE $a_i.\mode \gets \cand$ where
$i = \max \{j \in \{0,1\} \mid a_i.\btimer \ge a_{1-i}.\btimer\}$
\ENDIF
\STATE $\quick()$
 \STATE \ifthen{$a_0,a_1 \in \vdone \land a_0.\level=a_1.\level$}{$a_1.\leader \gets \fl$}
\STATE \fordo{$i \in \{0,1\}: a_i \in \vtimer$}{$a_i.\btimer \gets \max(a_i.\btimer-1,0)$}
\STATE \fordo{$i \in \{0,1\}: a_i.\btimer = 0$
}{
$a_i.\phase \gets \checking$
}
\ENDIF
\end{algorithmic}
\vblank

$\toelect(i)$
\begin{algorithmic}[1]
\setcounter{ALC@line}{20} 
\STATE $a_i.\phase \gets \elect$
\STATE \ifthen{$a_i.\detect = \fl$}{$(a_i.\leader,a_i.\mode) \gets (\tr,\cand)$}
\end{algorithmic}
\end{algorithm}

\subsection{Details of the Protocol}
\label{sec:sync}
The pseudocode of $\popt$ is shown in Algorithm \ref{al:popt}.
The main bodies of the two phases are realized by lines 3 and 17 and the procedure
$\toelect()$. Line 3, which corresponds to the check phase, performs the propagation 
of detect flags (\ie the existence of leader agents) using
the epidemic. Line 17 indeed corresponds to the task of QE.
The procedure $\toelect()$ corresponds to the phase transition from check to election, where the agent not detecting the existence of leaders becomes a leader. The remaining part is devoted to the synchronization mechanism including the mode switching scheme. Lines 4-5 correspond to the implementation of CHVP, where the timer variable $\rtimer$ is updated (line 4), and the transition to the election phase
is triggered when timeout occurs (line 5). Lines 6-9 are the mechanism 
supporting smooth phase transition, which is crucial for guaranteeing 
the correctness criteria of the synchronization mechanism explained later.
Lines 6-7 and 8-9 respectively address
the transition from check to election and its reversal.
Lines 11-20 correspond to the task for synchronizers and electors. The core of this part
is the mode switching scheme, described in lines 12-16. The switch from elector to synchronizer happens when a follower agent interacts with another follower with the same level (lines 12-13), and the opposite occurs when a synchronizer agent finds another synchronizer with a smaller timer value (lines 14-15). It is proved in
Section \ref{sec:analysis} that this scheme appropriately control the size of two classes.
Line 19 is the countdown of local timers held by synchronizers, and line 20 
is the phase transition from election to check. The leader elimination in line 18 is
not for the leader election itself, but rather to handle the initial configurations consisting only of leaders with the same level. Without this code, the protocol would be deadlocked in that case.
A synchronizer is always a follower, as guaranteed by line 1.

For stating the precise goal of the synchronization mechanism, we explain 
its intended behavior as well as the concise reason why such a behavior is attained.
\begin{enumerate}
 \item Starting from any configuration in $\call(\popt(\para))$, the population quickly reaches a configuration where $V = \vclarge$ holds with very high probability.
 %The CHVP protocol 
 %plays an important role because it shrinks the large deviation among 
 %all timers in the check phase. %(Lemma \ref{lemma:cclarge}). T
 The CHVP protocol shrinks the large deviation among all timers in the check phase.
 Therefore, once an agent goes back to the check phase from the election phase
 and resets its $\rtimer$ to $\rmax$, the population quickly reaches a configuration where $V=\vclarge$.
 One may think that the population gets stuck in the election phase once it reaches a configuration where all agents are electors (\ie there is no synchronizers). However, even starting from such a configuration, the population quickly creates at least one synchronizer because the following events occur with very high probability:
 (i) all leaders quickly decide their levels;
 (ii) the maximum level quickly propagates to the whole population as long as there is no synchronizer;
 (iii) since all the agents have the same level, the number of followers quickly becomes $\Theta(n)$ as long as there is no synchronizer; and
 (iv) two followers with the same level have an interaction quickly and one of them becomes a synchronizer.
 Once a synchronizer is created, some agent quickly goes back to the check phase because each synchronizer
 simply counts down its local timer.
 \item Once the current configuration satisfies $V = \vclarge$, CHVP decreases timer values (\ie $\rtimer$) with maintaining
a relatively smaller deviation among agents. 
 Since timer values of agents in $\vclarge$ are all $\Theta(\para \log n)$, the 
 check phase continues during $\Theta(\para n \log n)$ steps with very high probability. When an agent is timed out, it moves to the election phase. Then, owing to the low deviation of CHVP timers, no agent is still in $\vclarge$, and thus, the transition in lines 
 8-9 quickly takes all other agents to the election phase with very high probability.
 During this period, no agent goes back to the check phase from the election phase with very high probability because the upper limit of $\btimer$ is $\Theta(\para \log n)$ with a sufficiently large hidden constant.
\item In the election phase, the fastest timer (\ie the agent with the 
smallest timer value) of all synchronizers dominates the pace. 
Since it is never rewound, the election phase keeps $\Theta(\para n \log n)$ steps
with very high probability. Similar to the behavior from check to 
election, when an agent becomes a checker, all other agents are quickly brought back 
to the check phase with very high probability.
During this period, no agent goes to the election phase from the check phase with very high probability
because the upper limit of $\rtimer$ is $\Theta(\para \log n)$ with a sufficiently large hidden constant.
\end{enumerate}
The correctness criteria of the synchronization mechanism is that the system iterates
behaviors 2 and 3 with very high probability after recovery from unintended situations (by behavior 1), which is necessary
for our protocol to elect a unique leader in the loosely-stabilizing manner.
The formal proof of the correctness is presented in the next section. 

In $\quick()$, we expect that the largest level is quickly propagated to all leaders with very high probability.
Sudo \etal~\cite{SOI+19+} proved that this is true
if $|\vcand|=\Theta(n)$ holds
and $\vcand$ remains the same during this period.
However, $\popt$ frequently executes the mode switching from $\cand$ to $\timer$ and from $\timer$ to $\cand$.
Without the mode switching,
the number of agents with $\level=\smax$ is monotonically
non-decreasing, while
with the mode switching,
it decreases when an agent with $\level=\smax$
changes its mode from $\cand$ to $\timer$.
Therefore, we must evaluate
the effect of the mode switching on the speed of the propagation.
Fortunately,
there is no severe effect
of the mode switching for our purpose:
every leader in $\vdone$ whose $\level$ is not the largest
becomes a follower within $O(n \log n)$ steps
with probability $1-o(1)$, as we will prove in the next section (Lemma \ref{lemma:propWithSwitch}).
%the largest level is propagated to the
%whole population within $O(cn \log n)$ steps with very high probability, even with the mode switching.

%As we will see later in Section \ref{sec:implement}, 
%the value of $\level$ are sometimes reset.
%However, we prove 

\section{Correctness}
\label{sec:analysis}
In this section,
we prove that
$\popt(\para)$ is an $(O(\para n \log n),\Omega(n^{\para+1}))$-loosely-stabilizing leader election protocol as Theorem \ref{theorem:main}.
In terms of parallel time (\ie the number of steps divided by $n$),
the convergence and the holding time of this protocol
is $O(\para \log n)$ and $\Omega(n^{\para})$, respectively.
Specifically, we prove
$\max_{C \in \call(\popt(\para))} \ect{\popt(\para)}{C}{\safe}= O(\para n \log n)$
and $\min_{C \in \safe} \allowbreak \eht{\popt(\para)}{C}{\sle}= \Omega(n^{\para+1})$,
where $\safe$ is the set of configurations defined as follows.
\begin{definition}[Safe configurations]
Define $\safe$ as the set of all configurations
where $V=\vclarge$ holds, exactly one leader $\lagent$ exists
in the population, and $\lagent.\detect = \tr$ holds.
\end{definition}

In the following, we first prove the correctness of synchronization in Section \ref{sec:sync_correct}.
Next, using the correctness of synchronization, we analyze the expected holding time 
and expected convergence time in Sections \ref{sec:holding} and \ref{sec:convergence}, respectively.

In the rest of this section, for any set $\mathcal{C}$,
we say that an execution \emph{enters} $\mathcal{C}$
when it reaches a configuration in $\mathcal{C}$.

\subsection{Synchronization}
\label{sec:sync_correct}

To express the claims in a formal manner, 
%Before giving the formal proof, 
we first define the following notations.
%\begin{definition}
\begin{itemize}
 \item $\cala_X$: the set of all configurations $\call(\popt(\para))$ where $V_{X}=V$ holds. For example, $\ccheck$ is the set of all configurations
where every agent is in the check phase (\ie $\vcheck = V$).
 \item $\celarge$: the set of all configurations in $\celect$
where $v.\btimer \ge \bmax/2$ holds for every $v \in \vtimer$.
 \item $\creset$:
the set of all configurations in $\call(\popt(\para))$ where
there is at least one agent $v \in \vcheck$ such that $v.\rtimer = \rmax$.
\end{itemize} 
%\end{definition}
%Then, in this subsection, we prove the following three lemmas.
The goal of this subsection is to prove the following three lemmas
(Lemmas \ref{lemma:cclarge}, \ref{lemma:goElection}, \ref{lemma:goChecking}).
Intuitively, Lemma \ref{lemma:cclarge} claims that synchronization is recovered quickly with very high probability
from any configuration in $\call(\popt(\para))$, and 
Lemmas \ref{lemma:goElection} and \ref{lemma:goChecking} claims that
once the synchronization is recovered, the check phase and the election phase are iterated thereafter,
both taking $\Theta(\para n \log n)$ steps with sufficiently large hidden constants, with very high probability.

\begin{lemma}
\label{lemma:cclarge}
%Starting from any configuration $C_0 \in \creset$,
Let $C_0$ be any configuration in $\call(\popt(\para))$
and let $\Xi = \Xi_{\popt(\para)}(C_0,\rs)$.
Execution $\Xi$ enters $\cclarge$
quickly with very high probability.
\end{lemma}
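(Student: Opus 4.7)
The plan is to argue in two stages: first, that the execution quickly reaches an auxiliary set $\creset$ of configurations containing at least one checker with $\rtimer = \rmax$; second, that from any configuration in $\creset$ the execution quickly reaches $\cclarge$. The key observation enabling the first stage is that the only routes by which an agent acquires $\rtimer = \rmax$ are line~7 (an election-phase agent meeting a checker already in $\vclarge$, then being initialized as a fresh checker) and line~20 (a synchronizer whose $\btimer$ has hit zero, then initialized as a fresh checker). Since line~7 presupposes the very condition we are trying to establish, I rely on line~20 as the workhorse and drive the argument through the synchronizers.

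For Stage~(A), I do case analysis on $C_0$. If some checker already has $\rtimer = \rmax$, we are immediately in $\creset$. If $\vtimer \neq \emptyset$, pick any synchronizer $v$; line~19 decrements $v.\btimer$ each time $v$ participates in an interaction, so a Chernoff bound gives that $v$ participates in at least $\bmax$ such interactions within $O(\para n \log n)$ steps with probability $1 - O(n^{-\para})$. The agent $v$ could leave the synchronizer class via the $\mode$-switch in line~15 before timing out, but then its replacement carries a no-larger $\btimer$, so the minimum synchronizer $\btimer$ is monotonically non-increasing and the same Chernoff bound applied to the union of such candidates still forces a line~20 trigger. If $\vtimer = \emptyset$, the synchronizer-creation path is lines~12--13, which require two followers in $\vcand$ sharing the same $\level$. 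Using Lemma~\ref{lemma:game} together with the promised Lemma~\ref{lemma:propWithSwitch}, I will show that within $O(\para n \log n)$ steps either a synchronizer appears incidentally along the way or a $\Theta(n)$-sized subset of $\vcand \cap \vf$ settles at a common $\level$; a birthday-style collision then produces a synchronizer in $O(n\log n)$ additional steps with very high probability, after which the previous case applies.

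For Stage~(B), suppose $v^\star$ has $\rtimer = \rmax$ at the first reached configuration in $\creset$. Two concurrent mechanisms propagate large $\rtimer$-values. Within $\vcheck$, line~4 is exactly the CHVP update, so Lemma~\ref{lemma:prop} with $k = \para$ yields $\min_{v \in \vcheck} v.\rtimer \ge \rmax - c\para \log n$ within $O(\para n \log n)$ steps with probability $1 - O(n^{-\para})$; by choosing the hidden constants in $\rmax$ and in $\rmid = c\rmax$ so that $(1-c)\rmax$ comfortably exceeds $c\para\log n$, this lower bound is at least $\rmid$. Meanwhile, line~7 drags agents in $\velect$ back into $\vcheck$ with $\rtimer$ reset to $\rmax$ whenever they meet any checker in $\vclarge$; an epidemic-style argument on the indicator ``is a checker in $\vclarge$'' (Lemma~\ref{lemma:epidemic}) shows every election-phase agent is absorbed into $\vclarge$ within $O(\para n \log n)$ steps with very high probability. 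Crucially, each line~7 transition only injects a fresh $\rtimer = \rmax$, never decreasing any timer, so the CHVP timers are monotonically coupled with the ``no line~7 transitions'' idealization and the two mechanisms do not interfere.

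The main obstacle is the sub-case of Stage~(A) where $\vtimer = \emptyset$ and a synchronizer must be manufactured from scratch, because that requires analysing $\quick()$ under concurrent mode-switching --- precisely the setting of the promised Lemma~\ref{lemma:propWithSwitch}. A second subtlety is the book-keeping needed to union-bound several conditional high-probability events across Stages~(A) and (B) while maintaining an overall $1 - O(n^{-\para})$ failure probability; I will chain these events via the strong Markov property applied at the hitting time of $\creset$, rather than attempting a single monolithic analysis, so that each high-probability bound can be invoked in its natural ``clean start'' setting.
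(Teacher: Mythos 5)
Your two-stage decomposition (first reach $\creset$, then apply Lemma~\ref{lemma:prop} to lift every timer above $\rmid$) is exactly the paper's structure, and your Stage~(B) is essentially the paper's own proof of Lemma~\ref{lemma:cclarge} given its Lemma~\ref{lemma:creset}. The gap is in Stage~(A). Your case analysis branches only on whether some checker already has $\rtimer = \rmax$ and on whether $\vtimer$ is empty, but it never handles configurations in which the agents are mostly or entirely checkers with $\rtimer < \rmax$. In that situation $\vcand$ may be empty, so the lines~12--13 collision argument has no followers to collide, and no synchronizer can ever be manufactured until checkers first time out and enter the election phase via line~5. The paper closes exactly this hole with Lemma~\ref{lemma:countdown}: before the execution enters $\creset \cup \celect$, the quantity $\max_{v \in \vcheck} v.\rtimer$ is monotonically non-increasing and counts down at least as fast as in CHVP, so within $O(\para n \log n)$ steps the execution either enters $\creset$ directly or is driven into $\celect$, after which the chain ``all leaders finish their coin flips, then a synchronizer is created, then its $\btimer$ times out'' applies. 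Without this countdown step your Stage~(A) does not terminate on a nontrivial class of initial configurations.

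There are two smaller problems in the same stage. First, you lean on Lemma~\ref{lemma:propWithSwitch}, which only yields probability $1-o(1)$; that cannot be chained into a $1-O(n^{-\para})$ conclusion. It is also unnecessary: as the paper observes in Lemma~\ref{lemma:fromDone}, while $\vtimer = \emptyset$ and all leaders are done, the only available mode switch is $\cand \to \timer$, whose very occurrence ends the argument successfully, so the level epidemic runs unobstructed and plain Lemma~\ref{lemma:epidemic} suffices (amplified to very high probability by repeating the $O(n \log n)$-step trial $\para$ times). Second, Lemma~\ref{lemma:game} concerns the probability of a unique winner and plays no role in manufacturing a synchronizer; what you actually need is that all agents reach a common level so that leader--leader meetings generate $\Theta(n)$ followers, and before that you need the preliminary step (the paper's Lemma~\ref{lemma:done}) that all leaders finish deciding their levels, since line~5 of $\quick()$ blocks level propagation through undone leaders.
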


\begin{lemma}
\label{lemma:goElection}
Let $C_0$ be any configuration in $\cclarge$
and let $\Xi = \Xi_{\popt(\para)}(C_0,\rs)$.
Then, the following hold with very high probability:
\begin{enumerate}
    \item execution $\Xi$ enters $\celarge$ quickly,
    \item no agent moves from the election phase to the check phase
    before $\Xi$ enters $\celarge$, and
    \item execution $\Xi$ stays in $\ccheck$ for $\Omega(n \rmid)=\Omega(\para n \log n)$ steps.
\end{enumerate}
\end{lemma}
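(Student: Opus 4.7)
The plan is to split $\Xi$ around the first timeout event $t^\star$ (the first step at which some agent's $\rtimer$ reaches $0$) and to argue separately about the pure check regime $[0,t^\star]$ and the ensuing epidemic of electors.

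For item 3, I would observe that while $V=\vcheck$ only line 4 acts on $\rtimer$, and the CHVP update $v.\rtimer \gets \max(a_0.\rtimer-1,a_1.\rtimer-1,0) \ge v.\rtimer-1$ implies that $v.\rtimer$ decreases by at most one per interaction in which $v$ participates. Since $v.\rtimer \ge \rmid$ at $C_0$, $v.\rtimer$ can reach $0$ only after $v$ has participated in at least $\rmid$ interactions. In $t = n\rmid/4$ steps the count of interactions of $v$ is binomial with mean $\rmid/2$, so a Chernoff bound gives probability $\exp(-\Omega(\rmid)) = \exp(-\Omega(\para\log n))$ of reaching $\rmid$; choosing the hidden constant in $\rmid = \Theta(\para\log n)$ large enough, this is $O(n^{-\para-1})$ per agent, hence $O(n^{-\para})$ after union-bounding over the $n$ agents. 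Because no other rule can transition an agent out of $\ccheck$ while $V=\vcheck$, we conclude that $\Xi$ remains in $\ccheck$ for $\Omega(n\rmid)=\Omega(\para n \log n)$ steps with very high probability, giving item 3 as well as the byproduct $t^\star = \Omega(\para n\log n)$.

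For items 1 and 2, I would first pin down $t^\star$ from above and control the spread of $\rtimer$ at that moment. Lemma \ref{lemma:countdown} with $l_1 = \rmax$, $l_2 = 0$, $k = \para$ gives $t^\star = O(\para n\log n)$ with very high probability. For the spread, Lemma \ref{lemma:prop} (applied at $C_0$ and, if needed, iterated on consecutive $\Theta(\para n\log n)$-windows, using that $\max_v v.\rtimer$ is monotonically non-increasing while $V=\vcheck$) yields $\max_v v.\rtimer - \min_v v.\rtimer = O(\para\log n)$ at step $t^\star$ with very high probability. Because $\min_v v.\rtimer = 0$ at $t^\star$, this forces $\max_v v.\rtimer \le O(\para\log n)$, which is strictly less than $\rmid$ provided the hidden constant in $\rmid$ exceeds the Lemma \ref{lemma:prop} constant---precisely the role of the paper's condition ``$c/(1-c)$ sufficiently large'' in $\rmid = c\rmax$. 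Hence every surviving checker lies in $\vcsmall$, and because line 4 only decreases $\rtimer$ values, no checker ever re-enters $\vclarge$ during the window $[t^\star, t^\star + O(\para n\log n)]$; in particular line 7 never fires. The only cross-class rule that can trigger on a checker--elector pair is then lines 8--9, which convert the checker to an elector, so the elector set grows by an epidemic-style process. Lemma \ref{lemma:epidemic} with $k = \para$ guarantees that all agents are in $\velect$ within another $O(\para n\log n)$ steps with very high probability, yielding the $V = \velect$ part of item 1.

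For item 2 and the $\btimer$ part of item 1, every synchronizer present in the window $[t^\star, t^\star + O(\para n\log n)]$ was created by a class change (checker$\,\to\,$synchronizer via $\toelect$ with preceding $\mode = \timer$, or elector$\,\to\,$synchronizer via line 13), which reinitializes $\btimer$ to $\bmax$. In an $O(\para n\log n)$-step window a single agent participates in at most $O(\para\log n)$ interactions with high probability by Chernoff, so any synchronizer decrements $\btimer$ at most $O(\para\log n)$ times. Choosing the hidden constant in $\bmax = \Theta(\para\log n)$ sufficiently large forces $\btimer \ge \bmax/2$ throughout, which simultaneously blocks line 20 (item 2) and certifies $\celarge$ the moment $V = \velect$ (item 1). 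The main obstacle is the consistent, correctly-ordered choice of the three hidden constants: $\rmid$ must dominate both the Chernoff constant from item 3 and the Lemma \ref{lemma:prop} spread constant, $(1-c)\rmax$ must also dominate the latter so that the spread argument survives to $t^\star$, and $\bmax$ must dominate the Chernoff bound on synchronizer interactions over the transition window. All three scales are $\Theta(\para\log n)$, but they must be fixed in that order.
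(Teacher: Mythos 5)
Your overall skeleton matches the paper's: item 3 by counting the interactions of a single agent (each needs at least $\rmid$ of them before its timer can vanish), then an analysis of the first timeout $t^\star$, an epidemic-style spread of the election phase via lines 8--9 justified by Lemma \ref{lemma:epidemic}, and the ``$\btimer \ge \bmax/2$ throughout'' argument that simultaneously yields item 2 and upgrades $\celect$ to $\celarge$. The one place where you genuinely diverge is how you establish that no checker lies in $\vclarge$ when the first agent times out, and that step has a gap as written. You invoke Lemma \ref{lemma:prop} to claim that $\max_{v} v.\rtimer - \min_{v} v.\rtimer = O(\para\log n)$ \emph{at the random step $t^\star$}. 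But Lemma \ref{lemma:prop} only asserts that within $O(\para n\log n)$ steps the execution \emph{reaches some} configuration with small deviation; it says nothing about the deviation at an arbitrary intermediate step, let alone at a stopping time such as $t^\star$. Between the checkpoint configurations produced by your iteration, the minimum is not monotone (it jumps upward whenever the lowest agent meets a higher one), so the spread can transiently exceed $O(\para\log n)$, and ``the spread is small at $t^\star$'' does not follow from monotonicity of the maximum alone. The conclusion can be rescued --- upper-bound $\min_v v.\rtimer$ at the last checkpoint before $t^\star$ by the number of interactions of the agent that times out (which is $O(\para\log n)$ with very high probability since each interaction decreases its timer by at most one), then combine this with the checkpoint deviation bound and monotonicity of the maximum --- but that extra argument is missing from your write-up.

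The paper avoids this issue entirely: it applies Lemma \ref{lemma:countdown} with $l_1=\rmax$ and $l_2=\rmid$ to show that the execution enters $\ccsmall$ (every checker strictly below $\rmid$) within $O(n(\rmax-\rmid))$ steps, and then uses your item 3 together with the standing assumption that $\rmid/(\rmax-\rmid)$ is a sufficiently large constant to conclude that this happens \emph{before} any timeout with very high probability; since checkers' timers only decrease thereafter, no surviving checker ever re-enters $\vclarge$, which is exactly the fact you need. I would replace your Lemma \ref{lemma:prop} step with this argument (or supply the missing stopping-time patch); everything downstream of that point in your proof is sound and coincides with the paper's, and your treatment of the constants' ordering is consistent with the paper's ``on-demand'' conventions.
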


\begin{lemma}
\label{lemma:goChecking}
Let $C_0$ be any configuration in $\celarge$
and let $\Xi = \Xi_{\popt(\para)}(C_0,\rs)$.
Then, the following hold with very high probability:
\begin{enumerate}
    \item execution $\Xi$ enters $\cclarge$ quickly,
    \item no agent moves from the check phase to the election phase
    before $\Xi$ enters $\cclarge$, and
    \item execution $\Xi$ stays in $\celect$ for $\Omega(n \rmid)=\Omega(\para n \log n)$ steps.
\end{enumerate}
\end{lemma}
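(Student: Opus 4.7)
I would prove Claim~3 (the $\Omega(\para n \log n)$-step stay in $\celect$) first, by a concentration argument on the $\btimer$ timers, and then bundle Claims~1 and~2 into a single analysis of the transition that begins once the first synchronizer's timer reaches $0$. The whole argument hinges on one simple monotonicity: the value $\btimer$ of a fixed synchronizer decreases by exactly $1$ per participating interaction (line~19); fresh synchronizers arising via line~13 are initialized to the \emph{maximum} value $\bmax$; and the victim of the mode switch in line~15 is always the one with the \emph{larger} $\btimer$. So the minimum of $\btimer$ over $\vtimer$ is never reset upward, and can drop only through a genuine decrement of its current holder.

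For Claim~3, observe that the only way an agent leaves the election phase is line~20, so the phase can only end if some synchronizer reaches $\btimer = 0$. For a fixed agent $v$, attaining $\btimer = 0$ within a window of $T$ steps requires at least $\bmax/2$ participating interactions: $v.\btimer \ge \bmax/2$ holds either by the definition of $\celarge$ (if $v$ is a synchronizer at $C_0$) or by reinitialization to $\bmax \ge \bmax/2$ (if $v$ later becomes one via line~13). Choosing $T = c_1 \para n \log n$ with $c_1$ sufficiently small and applying a Chernoff bound to the binomial count of $v$'s participations, the probability of failure is at most $\exp(-\Theta(\bmax)) = n^{-\Theta(\para)}$. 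With $\bmax = \Theta(\para \log n)$ and the hidden constant chosen large enough, a union bound over the at most $n$ agents delivers the required $1 - O(n^{-\para})$ guarantee.

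For Claims~1 and~2, a symmetric Chernoff estimate upper-bounds by $T_0 = O(\para n \log n)$ the first step at which some synchronizer reaches $\btimer = 0$; at that step it moves to the check phase via line~20 with $\rtimer = \rmax$, and so enters $\vclarge$. From then on lines 6--7 act as a one-way epidemic propagating membership in $\vclarge$: whenever an agent in $\velect$ interacts with an agent in $\vclarge$, the former becomes a checker initialised to $\rtimer = \rmax$, hence a new member of $\vclarge$. A standard epidemic analysis in the spirit of Lemma~\ref{lemma:epidemic} shows that all agents enter $\vcheck$ within an additional $O(\para n \log n)$ steps with very high probability. Throughout this transition window I would argue that every checker maintains $\rtimer \ge \rmid$: line~4 can reduce a participating agent's $\rtimer$ by at most one per interaction, so a further Chernoff bound on participation counts gives $\rtimer \ge \rmax - O(\para \log n)$; taking the hidden constant in $\rmax$ large relative to the fixed gap $1 - c$ (where $\rmid = c\rmax$) makes this strictly larger than $\rmid$. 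This simultaneously rules out lines~5 and~9 (proving Claim~2) and, after the epidemic has completed, places the execution in $\cclarge$ (proving Claim~1).

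The main obstacle I expect is coordinating all the implicit constants so that the three probabilistic statements---the duration lower bound, the upper bound on $T_0$ plus the epidemic completion time, and the $\rtimer$ drift bound---hold \emph{simultaneously} with probability $1 - O(n^{-\para})$; each invokes its own Chernoff exponent, and these have to be pinned down in the ``on-demand'' manner promised in Section~\ref{sec:variables}, with $\bmax$, $\rmax$, and the ratio $c$ tuned together. A secondary subtlety is the degenerate case $\vtimer = \emptyset$ at $C_0$, which is permitted vacuously by the definition of $\celarge$: here a preliminary step analogous to item (iv) of the behaviour-1 sketch in Section~\ref{sec:overview} is required to show that $\vtimer$ is repopulated within $O(n \log n)$ steps via $\quick$, line~18, and line~13, after which the main analysis applies unchanged.
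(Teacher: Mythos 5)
Your proposal is correct and follows essentially the same route as the paper: Claim~3 by the observation that the first departure from the election phase must go through line~20 and hence costs some agent $\bmax/2$ participating interactions (Chernoff plus a union bound over agents), and Claims~1 and~2 by waiting for the first $\btimer$ timeout, letting lines~6--7 spread the check phase as a one-way epidemic, and bounding the drift of $\rtimer$ (at most one per participation, new checkers starting at $\rmax$) so that every checker stays in $\vclarge$ throughout --- which is exactly how the paper rules out lines~5 and~9 and lands in $\cclarge$. The one place where your plan is materially thinner than the paper's is the step you defer as a ``secondary subtlety'': since $\celarge$ only constrains agents of $\vtimer$, it admits configurations with $\vtimer=\emptyset$ (and, more generally, with leaders still flipping coins), and reaching $\creset$ from there is not just item~(iv) of the overview but the full chain (i)--(iv): all leaders must first finish deciding their levels, the maximum level must propagate, $\Theta(n)$ same-level followers must accumulate via line~18, and only then does line~13 create a synchronizer. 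The paper packages this as Lemmas~\ref{lemma:done}, \ref{lemma:fromDone}, and \ref{lemma:fromTimer}, each of which needs $O(\para n\log n)$ (not $O(n\log n)$) steps to reach the ``very high probability'' guarantee, obtained by repeating an $O(n\log n)$-step, $1-O(1/n)$-probability trial $\para$ times; your single $O(n\log n)$-step claim would only give probability $1-O(1/n)$ and must be amplified the same way. With that piece filled in, your constant-coordination concerns are exactly the ``on-demand'' tuning the paper performs, and the rest of the argument goes through as you describe.
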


In what follows, we first prove Lemma \ref{lemma:cclarge}
by giving three supplemental lemmas (Lemmas \ref{lemma:done}, \ref{lemma:fromDone}, \ref{lemma:fromTimer}, and  \ref{lemma:creset}).
We next prove Lemmas \ref{lemma:goElection} and \ref{lemma:goChecking}.
%We leave the proofs of Lemmas \ref{lemma:goElection} and \ref{lemma:goChecking} in Appendix;
%the proofs are almost trivial
%given Lemmas \ref{lemma:epidemic}, \ref{lemma:countdown},
%\ref{lemma:done}, \ref{lemma:fromDone}, and
%\ref{lemma:fromTimer}.

%\begin{lemma}
%\label{lemma:keepChecking}
%Starting from any configuration $C_0 \in \cclarge$,
%execution $\Xi_{\popt(\para)}(C_0,\rs)$ does not go out of $\ccheck$
%for $\Omega(n \rmid)$ steps
%with very high probability.
%\end{lemma}

%begin{lemma}
%\label{lemma:keepElection}
%Starting from any configuration $C_0 \in \celarge$,
%execution $\Xi_{\popt(\para)}(C_0,\rs)$ does not go out of $\celect$
%for $\Omega(n \bmax)$ steps
%with very high probability.
%\end{lemma}

%\begin{enumerate}
% \item Starting from any configuration in $\call(\popt(\para))$,
%the population enters $\cclarge$
%quickly with very high probability
%(Lemma \ref{lemma:cclarge}).
% \item Starting from any configuration in $\cclarge$,
%with very high probability,
%the population
%stays in $\ccheck$ for $\Theta(n \rmid)=\Theta(\para n \log n)$
%steps and enters $\celarge$ quickly,
%during which
%no agent goes back to the checking phase from the electing phase
%(Lemmas \ref{lemma:goElection}).
% \item Starting from any configuration in $\celarge$,
%with very high probability,
%the population stays in $\celect$
%for $\Theta(n \bmax)=\Theta(\para n \log n)$
%steps and enters $\cclarge$ quickly,
%during which no agent goes to the electing phase
%from the checking phase.
%(Lemmas \ref{lemma:goChecking}).
%\end{enumerate}

%\begin{proof}[Proof of Lemma \ref{lemma:goElection}]
%anko
%\end{proof}

\begin{lemma}
\label{lemma:done}
Starting from any configuration $C_0 \in \celect$,
execution $\Xi_{\popt(\para)}(C_0,\rs)$
quickly enters $\creset$
or reaches a configuration in $\celect$ satisfying
%all leaders have $\done=1$
$\vundone = \emptyset$
with very high probability.
\end{lemma}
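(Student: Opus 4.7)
The plan is to set $T = c\para n \log n$ for a sufficiently large constant $c$ and to distinguish two cases according to whether, during the first $T$ steps of $\Xi$, some agent transitions into the check phase. Let $A$ be the event that such a transition happens and let $B$ be the event that every $v \in \vundone$ in $C_0$ is selected as a responder at least once in $[0,T]$. I will show that $A \cup B$ implies the conclusion of the lemma and that $\Pr[B] \ge 1 - O(1/n^\para)$.

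For the first case, the only line of $\popt$ that creates a new checker from a configuration in $\celect$ is line 20, where a synchronizer whose $\btimer$ has reached $0$ moves into $\vcheck$; the class-change initialization then resets its $\rtimer$ to $\rmax$. Hence the very first step at which any agent enters $\vcheck$ already yields a configuration in $\creset$, and we are done for the $A$-branch without any further probabilistic argument.

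For the second case, in which no checker exists during $[0,T]$, I will carry out a short case analysis on the transition rules to show that $v \in \vundone$ is unaffected by every rule of $\popt$ except the responder branch of the first conditional of $\quick()$. Concretely, line 1, line 13, line 15, the second conditional of $\quick()$, and line 18 each require $v$ to be in $\vtimer$, $\vf$, or $\vdone$, which $v$ is not; lines 6--9 need the partner to be in $\vclarge$ or $\vcsmall$, which cannot occur under the no-checker assumption; and no new leader is ever produced in $\celect$ because $\toelect$ can only be fired in the presence of $\vcheck$ agents. Thus, as soon as $v \in \vundone$ in $C_0$ is first chosen as a responder, it becomes done, and since no new undone leaders are created, $\vundone$ is emptied once every initial member has been chosen as a responder at least once during $[0,T]$.

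Finally, a standard union bound yields $\Pr[B] \ge 1 - O(1/n^\para)$: for a fixed $v$, the probability of never being a responder in $T$ steps is at most $(1-1/n)^T \le e^{-T/n}$, which for $T = c\para n\log n$ is at most $n^{-c\para}$, and summing over the at most $n$ elements of the initial $\vundone$ costs only an extra factor of $n$. Combining the two cases then gives the lemma. The mildly subtle obstacle is the case-check ruling out side effects of the other rules on $v$ while it is still undone; once that is pinned down, the rest is a routine probabilistic calculation.
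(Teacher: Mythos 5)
Your proof is correct and follows essentially the same route as the paper's: an undone leader becomes done the first time it acts as a responder, all initial members of $\vundone$ are responders within $O(\para n\log n)$ steps with very high probability (with no new undone leaders created absent checkers), and any return to the check phase immediately yields a configuration in $\creset$. The only cosmetic difference is that you bound the probability of never being a responder directly over $c\para n\log n$ steps, whereas the paper first shows via a Chernoff bound that each such agent has $2\lg n$ interactions, argues that one of them is a tail, and then boosts the resulting $1-O(1/n)$ guarantee by repeating the analysis $\para$ times.
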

\begin{proof}
Let $\Xi=\Xi_{\popt(\para)}(C_0,\rs)$.
When an agent goes back to the check phase from the election phase, it substitutes $\rmax$ for its $\rtimer$.
Therefore, $\Xi$ never leaves $\celect$ until it enters $\creset$.
Let $v$ be any agent that satisfies $v \in \vundone$ in $C_0$.
As long as $v \in \vundone$ holds, 
$v$ makes a coin flip every time $v$ has an interaction.
Since every agent joins an interaction with probability $2/n$ at each step,
by the Chernoff bound
\footnote{
We often use the Chernoff bound in this paper.
We quote the Chernoff bound in Appendix
for the readers who are not familiar with probability.
},
$v$ has $2 \lg n$ or more interactions
within sufficiently large $O(n \log n)$ steps
with probability $1-O(1/n^{2})$.
Therefore, $v.\done=1$ holds within $O(n \log n)$ steps with
probability $1-(1/2)^{2 \log n}-O(1/n^{2})=1-O(1/n^{2})$ because each coin flip results in ``tail'' with probability exactly $1/2$,
by which $v$ leaves $\vundone$. By the union bound, execution $\Xi$ enters $\creset$ or reaches a configuration in $\celect$ satisfying $\vundone = \emptyset$ within $O(n \log n)$ steps with probability $1-O(1/n)$.
We obtain the lemma by repeating this analysis $\para$ times.
\end{proof}

\begin{lemma}
\label{lemma:fromDone}
Starting from any configuration $C_0 \in \celect$
satisfying $\vundone = \emptyset$,
execution $\Xi_{\popt(\para)}(C_0,\rs)$
quickly reaches a configuration in $\celect$
satisfying $\vtimer \neq \emptyset$ with very high probability.
\end{lemma}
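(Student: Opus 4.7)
The plan is to show that the only synchronizer-creating rule available to $\popt$, line~13 of Algorithm~\ref{al:popt}, fires within $O(\para n \log n)$ steps with probability $1-O(n^{-\para})$. We may assume $\vtimer = \emptyset$ in $C_0$, hence $V=\vcand$ there. As long as the execution stays in $\celect$ with $\vtimer$ empty, no $\toelect$ call is invoked, so the hypothesis $\vundone = \emptyset$ is preserved, the set $V_1 := \{v \in V : v.\done = \tr\}$ is invariant, and every leader lies in $V_1$ (in particular, lines~1--4 of $\quick$ never fire). It thus suffices to make two agents in $\vcand \cap \vf$ of equal level interact.

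I would split on $|V_1|$. If $|V_1| \le n/2$, then more than $n/2$ agents have $\done = \fl$; by $\vundone = \emptyset$ all of them are followers, and since $\level \in [0,2m]$, the pigeonhole principle supplies $k = \Omega(n/\log n)$ of them sharing a common level. None of the rules of $\popt$ can touch $\level$, $\done$, $\leader$, or $\mode$ of any such agent while $\vtimer$ stays empty (lines~5--7 of $\quick$ need $\done = \tr$ on both sides, line~18 needs two leaders, lines~14--15 need two synchronizers, and the triggering of line~13 itself is what we are trying to prove), so this pool is stable. The per-step probability that two of them are the interacting pair is $\Omega(k^2/n^2) = \Omega(1/\log^2 n)$, and a Chernoff bound yields line~13 firing within $O(\para \log^3 n)$ steps with probability $1-O(n^{-\para})$.

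If $|V_1| > n/2$, I would first prove that the maximum level $s^* := \max_{v \in V_1} v.\level$ propagates throughout $V_1$ via lines~5--7 of $\quick$ within $O(\para n \log n)$ steps with very high probability: this is a standard epidemic on $V_1$, driven by the ``useful'' steps in which the scheduler picks two $V_1$-agents (probability $|V_1|(|V_1|-1)/(n(n-1)) = \Omega(1)$ per step). After propagation, write $F = |V_1 \cap \vf|$ and $L = |V_1 \cap \vl|$; subsequently only line~13 (creating a synchronizer) or line~18 (a same-level leader clash that decrements $L$ and increments $F$) can change them. If $F \ge n/8$, the previous paragraph's argument applied to the $F$ same-level followers finishes in $O(\para \log n)$ steps w.v.h.p. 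Otherwise $L > 3n/8$, so line~18 fires each step with probability $\Omega(1)$; coupling with an i.i.d.\ Bernoulli process and applying Chernoff shows $F$ reaches $n/8$ within $O(n)$ steps with probability $1 - e^{-\Omega(n)}$, after which the previous sentence applies.

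The only nontrivial point is the epidemic-propagation step, where Lemma~\ref{lemma:epidemic} (stated for the full population) must be transferred to the restricted sub-population $V_1$ under the scheduler $\rs$ over all of $V$. The clean route is to condition on the useful steps: a Chernoff bound guarantees $\Omega(\para n \log n)$ of them within $O(\para n \log n)$ actual steps, and, conditional on usefulness, each step delivers a uniformly random pair of $V_1$-agents; thus Lemma~\ref{lemma:epidemic} applies verbatim with $V_1$ in place of $V$. Every other step in the plan reduces to a routine geometric-time or Chernoff computation.
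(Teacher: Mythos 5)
Your proof is correct, and it takes a genuinely different route from the paper's. The paper's own argument is shorter: it invokes Lemma~\ref{lemma:epidemic} on the whole population to claim that the maximum level is propagated to every agent within $O(n\log n)$ steps, then observes that once all levels agree, leader clashes at line~18 (probability at least $1/4$ per step while $|\vf|<n/2$) drive $|\vf|$ up to $n/2$, after which line~13 fires with constant probability per step; very high probability is obtained by repeating the whole $O(n\log n)$-step trial $\para$ times. You instead split on the size of $V_1=\{v\in V : v.\done=\tr\}$, dispose of the case $|V_1|\le n/2$ by pigeonholing the frozen levels of the $\done=\fl$ followers (a static pool of $\Omega(n/\log n)$ equal-level followers, hence line~13 fires in $O(\para\log^3 n)$ steps), and in the case $|V_1|>n/2$ run the level epidemic only on $V_1$, followed by the same clash-then-meet endgame; you reach probability $1-O(n^{-\para})$ directly rather than by repetition. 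The substantive difference is that your case split explicitly confronts the fact that a follower with $\done=\fl$ never executes lines~5--7 of $\quick$ and therefore never adopts the maximum level, so the maximum level cannot in general reach ``the whole population'' as the paper's proof asserts; your restriction of the epidemic to the invariant set $V_1$, justified by conditioning on the steps whose pair lands in $V_1\times V_1$ (which is uniform on $V_1$ and occurs with constant probability per step), is the clean way to handle this. What the paper's version buys is brevity; what yours buys is that the argument visibly covers every configuration $C_0\in\celect$ with $\vundone=\emptyset$, including adversarial ones in which many followers carry $\done=\fl$ and scattered levels. All the remaining steps you defer (geometric waiting times, Chernoff bounds with parameter $\para$, the invariance of $V_1$ and of $\vundone=\emptyset$ while $\vtimer=\emptyset$) check out against the pseudocode.
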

\begin{proof} 
%While $\Xi$ stays in $\celect$, $\max_{v \in }$
%If at least one agent is in $\vtimer$ in $\ca$ $C_0$,
%timeout of $\btimer$ hapens in some agent in $\vtimer$ 
%within $O(\para \log n)$ steps with very high probability.
Let $\Xi = \Xi_{\popt(\para)}(C_0,\rs)$.
In execution $\Xi$,
no leader has $\done = \fl$
before $\Xi$ reaches a configuration in $\celect$ where $\vtimer \neq \emptyset$. Therefore, it suffices to show that
$\Xi$ reaches a configuration in $\celect$ where $\vtimer \neq \emptyset$ within $O(n \log n)$ steps with high probability
because we obtain the lemma by repeating this trial $\para$ times.

While both $\vundone = \emptyset$ and $\vtimer = \emptyset$ hold,
nothing prevents the epidemic from propagating the maximum value of $\level$s.
%execept for interactions
%where an agent swithes its mode from $\cand$ to $\timer$.
Thus, by Lemma \ref{lemma:epidemic},
the maximum value is propagated to the whole population
within $O(n \log n)$ steps with high probability \cite{fast}.
Once all agents have the same level, 
the number of followers increases by one every time two leaders meet. 
As long as $|\vf| < n/2$ and $\vtimer=\emptyset$ hold,
two leaders meet each other with probability at least $1/4$ at each step.
Hence, $|\vf| \ge n/2$ or $\vtimer \neq \emptyset$ holds
%the followers become majority (\ie $|\vf| \ge n/2$ holds) or at least one agent switches its mode to $\timer$
within $O(n \log n)$ steps with high probability.
In the former case, at each step thereafter,
two followers have an interaction
and one of them becomes a synchronizer (at Line 13)
with a constant probability.
Therefore, $|\vtimer| \neq \emptyset$ holds 
within $O(\log n)$ steps
with high probability.
\end{proof}

\begin{lemma}
\label{lemma:fromTimer}
Starting from any configuration $C_0 \in \celect$
satisfying $\vtimer \neq \emptyset$ holds,
execution $\Xi_{\popt(\para)}(C_0,\rs)$
quickly enters $\creset$
with very high probability.
\end{lemma}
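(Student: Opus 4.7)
The plan is to track the minimum btimer value among the current synchronizers, $\mu_t := \min_{v \in \vtimer} v.\btimer$ at configuration $C_t$, and show that $\mu_t$ reaches $0$ within $O(\para n \log n)$ steps with very high probability. Since every btimer lies in $[0, \bmax]$, we have $\mu_0 \le \bmax = \Theta(\para \log n)$. The step in which line~19 first drives some btimer from $1$ to $0$ also fires line~20 on the same agent, sending it to the check phase with $\rtimer$ initialized to $\rmax$; at that instant $\creset$ is entered.

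First I would verify that $\vtimer$ stays non-empty (so that $\mu_t$ is well-defined) until $\creset$ is entered. Starting from $C_0 \in \celect$, no checker exists, and a checker can be created only via line~20, whose firing already enters $\creset$. Hence in the relevant window a synchronizer can leave $\vtimer$ only by becoming an elector through line~15, which requires two synchronizers to interact. In particular $|\vtimer|$ cannot shrink from $1$ to $0$ without first entering $\creset$, so $\vtimer \ne \emptyset$ persists throughout.

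Next, a short case analysis shows that whenever a min-btimer synchronizer $u_t$ (with $u_t.\btimer = \mu_t$) participates in the step-$t$ interaction, $\mu$ strictly decreases by $1$. Because no checker exists, the partner of $u_t$ is either an elector or another synchronizer. If it is an elector, line~15 is inapplicable and line~19 alone decrements $u_t.\btimer$. If it is a synchronizer $u'$ with $u'.\btimer > \mu_t$, line~15 converts the larger one $u'$ into an elector and line~19 then decrements $u_t.\btimer$. If $u'.\btimer = \mu_t$, the responder tiebreak in line~15 converts one of the two into an elector while the other---still a synchronizer with btimer $\mu_t$---is decremented by line~19; either way a min-btimer synchronizer with btimer $\mu_t - 1$ survives. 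Since at most one btimer decrement occurs per interaction, $\mu_{t+1} = \mu_t - 1$ in every sub-case, and so $\Pr(\mu_{t+1} < \mu_t \mid C_t) \ge \Pr(u_t \text{ participates in step } t) = 2/n$.

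Finally, since each step independently (given history) offers a decrement with probability at least $2/n$, a standard Chernoff bound shows that within $T = O(\para n \log n)$ steps the number of decrement events exceeds $\bmax \ge \mu_0$ with probability $1 - O(n^{-\para})$, provided the hidden constant in $\bmax = \Theta(\para \log n)$ is chosen sufficiently large (as declared ``on-demand'' in Section~\ref{sec:variables}). At that point $\mu_t = 0$ and $\creset$ has been entered. The main obstacle is the case analysis around line~15 combined with line~19: one must verify that the responder-becomes-elector tiebreak never destroys the last min-btimer synchronizer before another one is decremented into its place, and rule out interference from the intervening $\quick()$ call and line~18---which is harmless because line~1 forces every synchronizer to be a follower, so $\vtimer \cap \vdone = \emptyset$ and the btimer bookkeeping is untouched.
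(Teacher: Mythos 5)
Your proposal is correct and follows essentially the same route as the paper's proof: track the monotonically non-increasing minimum $\btimer$ over $\vtimer$, observe that it decreases (or times out) whenever a minimum-holding synchronizer interacts, which happens with probability at least $2/n$ per step, and apply the Chernoff bound using $\bmax = \Theta(\para \log n)$. Your additional case analysis around lines~15 and~19 and the check that $\vtimer$ stays non-empty merely make explicit what the paper leaves implicit.
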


\begin{proof}
%Let $\Xi=\Xi_{\popt(\para)}(C_0,\rs)$.
Before $\Xi_{\popt(\para)}(C_0,\rs)$ enters $\creset$,
the smallest $\btimer$ in the population
(\ie $\min_{v \in \vtimer} \btimer$)
is monotonically non-increasing.
It decreases by one or a timeout of $\btimer$ occurs when a synchronizer with the smallest $\btimer$ has an interaction,
which occurs with probability at least $2/n$ at each step.
%Since each agent has an interaction with probability $2/n$ at each step
Since $\bmax = \Theta(\para \log n)$,
by the Chernoff bound,
some synchronizer encounters the timeout of $\btimer$ quickly with very high probability.
\end{proof}

%\begin{corollary}
%\label{cor:countdown}
%Starting from any configuration $C_0 \in \call(\popt(\para))$,
%execution $\Xi_{\popt(\para)}(C_0,\rs)$
%reaches a configuration in $\creset$ or $\celect$ within $O(\para n \log n)$ steps with very high probability.
%\end{corollary}

\begin{lemma}
\label{lemma:creset}
%Starting from any configuration $C_0 \in \creset$,
Starting from any configuration $C_0 \in \call(\popt(\para))$,
execution $\Xi_{\popt(\para)}(C_0,\rs)$
quickly enters $\creset$
with very high probability.
\end{lemma}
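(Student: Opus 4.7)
The plan is to reduce the general case to the one where the execution starts in $\celect$, for which Lemmas \ref{lemma:done}, \ref{lemma:fromDone}, and \ref{lemma:fromTimer} chain together directly. If $C_0 \in \celect$, Lemma \ref{lemma:done} quickly brings $\Xi$ to $\creset$ or to a configuration in $\celect$ with $\vundone = \emptyset$; Lemma \ref{lemma:fromDone} then takes it to $\celect$ with $\vtimer \neq \emptyset$; and Lemma \ref{lemma:fromTimer} finally brings it to $\creset$. Since each step holds with very high probability, a union bound over a constant number of events preserves that guarantee. If $C_0 \in \creset$ the claim is trivial. So the remaining task is to handle the case where $\vcheck \neq \emptyset$ at $C_0$ and $C_0 \notin \creset$: I will show that within $O(\para n \log n)$ steps, $\Xi$ enters $\creset$ or $\celect$ with very high probability.

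The main technical ingredient is a coupling of $\Xi$ with a pure $\pcd$ execution $\Xi' = \Xi_{\pcd}(D_0,\rs)$ on the same population driven by the same scheduler $\rs$, where $D_0$ sets $v.y = v.\rtimer$ for every $v \in \vcheck$ at $C_0$ and $v.y = 0$ for every other agent. The invariant I will maintain is: along any sample path where $\Xi$ has not entered $\creset$ by step $t$, every $v \in \vcheck$ at $C_t$ satisfies $v.\rtimer \le v.y$. This will be verified by case analysis on the interaction type. A check-check interaction updates the two $\rtimer$s by the same max-rule as $\pcd$ (line 4 exactly matches $\pcd$). A check-elector interaction either fires line 7 (which resets the elector's $\rtimer$ to $\rmax$ and enters $\creset$, forbidden by the conditioning) or fires line 9 (which removes the checker from $\vcheck$, vacuously satisfying the invariant for that agent). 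An elector-elector interaction leaves all checker $\rtimer$ values unchanged. In all surviving branches, the invariant is preserved.

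Applying Lemma \ref{lemma:countdown} to $\Xi'$ with $l_1 \le \rmax - 1 = O(\para \log n)$, $l_2 = 0$, and $k = \para$ gives that within $O(\para n \log n)$ steps, $\max_{v \in V} v.y = 0$ with probability $1 - O(n^{-\para})$. Combined with the coupling invariant, either $\Xi$ has already entered $\creset$ by that point, or $\max_{v \in \vcheck} v.\rtimer = 0$, i.e., every surviving checker has $\rtimer = 0$. In the latter case, any interaction involving such a checker forces it into the election phase: a check-check interaction triggers line 5 (since the updated $\rtimer$ remains $0$), while a check-elector interaction triggers line 9 (since $0 < \rmid$ places the checker in $\vcsmall$). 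A standard Chernoff argument shows that every surviving checker participates in at least one interaction within another $O(\para n \log n)$ steps with very high probability, after which $\Xi$ is in $\celect$. Invoking the first paragraph from $\celect$ then yields $\creset$ within a further $O(\para n \log n)$ steps, giving the total $O(\para n \log n)$ bound.

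The step I expect to be the main obstacle is the careful verification of the coupling invariant across all four interaction types, especially the check-elector case where $\Xi$ and pure $\pcd$ diverge. The $\vclarge$ sub-branch of that case would violate the invariant by resetting the elector's $\rtimer$ to $\rmax$, which is exactly the entry into $\creset$ and is therefore cleanly excluded by the conditioning. The rest is routine once one notices that whenever the protocol's dynamics on $\rtimer$ deviate from $\pcd$, the affected agent either leaves $\vcheck$ (making the invariant vacuous for it) or the deviation itself already establishes the target event $\creset$.
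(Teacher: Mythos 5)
Your proof is correct and follows essentially the same route as the paper: dominate $\max_{v \in \vcheck} v.\rtimer$ by a pure $\pcd$ execution to drive it to zero via Lemma~\ref{lemma:countdown} (unless $\creset$ or $\celect$ is entered first), and then chain Lemmas~\ref{lemma:done}, \ref{lemma:fromDone}, and \ref{lemma:fromTimer} from $\celect$. Your version is in fact somewhat more careful than the paper's, which asserts the pace-domination informally and elides the final step of evacuating the zero-timer checkers into the election phase.
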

\begin{proof}
%First, we show that execution $\Xi = \Xi_{\popt(\para)}(C_0,\rs)$ reaches a configuratrion $C'$ in $\creset$
%within $O(\para n \log n)$ steps with very high probability.
Let $\Xi=\Xi_{\popt(\para)}(C_0,\rs)$.
Note that $\Xi$ enters $\creset$ whenever
an agent goes back to the check phase from the election phase.
%During the period that at least one agent is in the check phase and no agent goes back to the check phase,
Before $\Xi$ enters $\creset$ or $\celect$, 
$\max_{v \in \vcheck} v.\rtimer$ is monotonically non-increasing and this value decreases at a pace faster than or equal to the pace at which the maximum value of variable $y$ decreases in the CHVP protocol in Section \ref{lemma:countdown}.
Therefore, by Lemma \ref{lemma:countdown} with $l_1=\rmax$, $l_2=0$, $k=\para$, $\Xi$ reaches
a configuration $C' \in \creset \cup \celect$
quickly with very high probability.
Once $\Xi$ enters $\celect$, it enters in $\creset$
quickly with very high probability
by Lemmas \ref{lemma:done}, \ref{lemma:fromDone},
and \ref{lemma:fromTimer}.
\end{proof}

%\begin{lemma}
%\label{lemma:reset}
%Starting from any configuration $C_0 \in \call(\popt(\para))$,
%execution $\Xi_{\popt(\para)}(C_0,\rs)$
%reaches a configuration in $\creset$ within $O(\para n \log n)$ steps with very high probability.
%\end{lemma}

%\section{The Analysis of the Convergence and Holding Time}
%\label{sec:analysis}
%\subsection{Holding Time}
%\subsection{Convergence Time}

\begin{proof}[Proof of Lemma \ref{lemma:cclarge}]
By Lemma \ref{lemma:creset}, we can assume that $C_0 \in \creset$.
Since we assume $\rmax-\rmid= O(\para \log n)$ with a sufficiently large hidden constant, by Lemma \ref{lemma:prop} with $l=\rmax$ and $k=\para$,
execution $\Xi_{\popt(\para)}(C_0,\rs)$ enters $\cclarge$ quickly with very high probability.
\end{proof}

\begin{proof}[Proof of Lemma \ref{lemma:goElection}]
The last claim is trivial because
each agent has an interaction with probability $2/n$ at each step
and at least one agent must have $\rmid$ interactions
until execution $\Xi$ leaves $\ccheck$.
We prove the first and the second claims below.

%As long as an agent $v$ is in the check phase,
%$v.\rtimer$ decreases at most one (and sometimes increases by higher value propagation) when it has an interaction.
%Since every agent has an interaction with probability $2/n$ at each step
%and $\rtimer$ of every agent is not less than $\rmid$ in $C_0$,
%all agents stay in the checking phase
%during the first $\Omega(n\rmid)=\Omega(\para n \log n)$
%steps of $\Xi$ with very high probability, by Chernoff Bound and the union bound.
By Lemma \ref{lemma:countdown} with $k=\para$,
$\Xi$ enters $\ccsmall$
within $O(n(\rmax - \rmid))=O(\para n \log n)$ steps
or at least one agent goes to the electing phase during the period
with very high probability.
Since we assume that $\rmid/(\rmax-\rmid)$ is a sufficiently large constant,
together with the third claim and the union bound,
we observe that $\Xi$ enters $\ccsmall$ quickly with very high probability.
Thereafter, by Lemma \ref{lemma:countdown} with $k=\para$,
at least one agent goes to the electing phase 
within $O(\para n \log n)$ steps with very high probability. 
Remember that whenever an agent in $\vcsmall$
and an agent in $\velect$ meet,
the former moves to the electing phase. 
Hence, once an agent goes to the electing phase,
the agents in the population go to the electing phase one after another
in completely the same way as the epidemic protocol in Section \ref{sec:epidemic}.
%where the maximum value of a variable is propagated to the whole population. 
Therefore, by Lemma \ref{lemma:epidemic} with $k=\para$,
$\Xi$ reaches a configuration $C \in \celect$
quickly with very high probability.
%all agents goes to the electing phase
%after one agent goes to the electing phase.
Since $\bmax$ is sufficiently large,
no agent has more than $\bmax/2$ interactions during this period
with very high probability.
Therefore, $C \in \celarge$ holds with very high probability.
From the above, we conclude that the first and second claims also hold.
%From the above, we conclude that the second and the third claims also hold.
%From the above, execution $\Xi$ enters $\celarge$
%within $O(\para n \log n)$ steps
%and no agent goes back to the checking phase from the electing phase
%during this period with very high probability.
\end{proof}

\begin{proof}[Proof of Lemma \ref{lemma:goChecking}]
%Let $\Xi=\Xi_{\popt(\para)}(C_0,\rs)$.
The last claim is trivial because
each agent has an interaction with probability $2/n$ at each step
and at least one agent must have $\bmax/2$ interactions
until execution $\Xi$ leaves $\celect$.
By Lemmas \ref{lemma:done}, \ref{lemma:fromDone}, and \ref{lemma:fromTimer},
$\Xi$ reaches a configuration $\creset$ within $O(\para n \log n)$ steps with very high probability.
Thereafter, in completely the same way as given in the second paragraph of the proof of Lemma \ref{lemma:goElection},
we can prove that $\Xi$ enters $\cclarge$ quickly (by the epidemic) and
$\vcsmall=\emptyset$ always holds during this period with very high probability.
Thus, the first claim holds.
No agent goes to the election phase
when it belongs to $\vclarge$, from which the second claim follows.
\end{proof}

\subsection{Holding Time}
\label{sec:holding}

\begin{lemma}
\label{lemma:holding} 
$\min_{C \in \safe} \eht{\popt(\para)}{C}{\sle}= \Omega(n^{\para+1})$.
\end{lemma}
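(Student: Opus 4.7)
The plan is to show that, starting from any $C_0 \in \safe$, the execution revisits $\safe$ with the same unique leader $\lagent$ after each ``cycle'' consisting of one check phase followed by one election phase. I will argue that each such cycle (i) preserves $\lagent$ as the unique leader with probability at least $1-O(1/n^{\para})$, and (ii) lasts $\Omega(\para n\log n)$ steps. A standard geometric-sum argument then yields the expected holding time
\[
\Omega(n^{\para}) \cdot \Omega(\para n\log n) \;=\; \Omega(n^{\para+1}).
\]

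The first substep is to show that no new leader is created during the check-phase part of a cycle. In $C_0$ we have $\lagent.\detect = \tr$ while every other agent has $\detect=\fl$, and line~3 of $\popt$ is precisely the epidemic $\pep$ acting on the variable $\detect$. By Lemma~\ref{lemma:epidemic} with $k=\para$, every checker obtains $\detect=\tr$ within some $c\para n\log n$ steps with probability $1-O(1/n^{\para})$. Lemma~\ref{lemma:goElection}(3) guarantees that $\Xi$ stays in $\ccheck$ for $\Omega(n\rmid)=\Omega(\para n\log n)$ steps, so by choosing the hidden constant of $\rmid$ large enough to dominate $c$, the $\detect$-epidemic terminates strictly before any agent invokes $\toelect$; hence the $\detect=\fl$ branch at line~22 is never taken.

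The second substep is to verify that $\lagent$ itself is never eliminated. The only rules that flip $\leader$ from $\tr$ to $\fl$ are line~1 of $\popt$, line~6 of $\quick$, and line~18 of $\popt$. Line~1 acts only on synchronizers, and the only rule that ever creates a synchronizer (line~13) demands that both endpoints already be followers, so $\lagent$ can never become a synchronizer. Lines~6 and~18 both require a partner with $\done=\tr$, but $\done$ is assigned $\tr$ only at line~3 of $\quick$ under the guard $\leader=\tr$; since $\lagent$ is the unique leader, no other agent ever reaches $\done=\tr$, so these rules never fire against $\lagent$. Combining these two substeps with Lemmas~\ref{lemma:goElection} and \ref{lemma:goChecking}---and noting that $\detect$ is reinitialized to $\leader$ whenever an agent becomes a checker (Table~\ref{tbl:variables})---the execution re-enters $\cclarge$ with $\lagent$ still the unique leader and $\lagent.\detect=\tr$, i.e., back in $\safe$.

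A union bound over the failure events of Lemmas~\ref{lemma:epidemic}, \ref{lemma:goElection}, and \ref{lemma:goChecking} bounds the per-cycle failure probability by $O(1/n^{\para})$. The number of consecutive successful cycles stochastically dominates a geometric random variable with success probability $1-O(1/n^{\para})$, so its expectation is $\Omega(n^{\para})$; multiplied by the $\Omega(\para n\log n)$ steps per cycle this yields $\Omega(n^{\para+1})$, as required. The main obstacle I anticipate is the tight coordination of hidden constants in the first substep: the constant defining $\rmid$ must be chosen large enough that the $\detect$-epidemic provably finishes before the fastest $\rtimer$ hits zero, which is exactly why Lemma~\ref{lemma:goElection}(3) bounds the duration of $\ccheck$ in terms of $\rmid$ rather than $\rmax-\rmid$.
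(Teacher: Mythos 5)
Your proof is correct and follows essentially the same route as the paper's: the same cycle decomposition via Lemmas \ref{lemma:goElection} and \ref{lemma:goChecking}, the same use of the $\detect$-epidemic (Lemma \ref{lemma:epidemic}) played against the $\Omega(n\rmid)$ duration of the check phase to rule out new leaders, and the same geometric accounting of $\Omega(n^{\para})$ successful cycles of length $\Omega(\para n\log n)$ each. The one place you genuinely diverge is the argument that $\lagent$ survives the election phase. The paper argues through levels: every agent resets $\level$ to $0$ on entering the election phase and only a leader can raise $\max_{v\in V} v.\level$, so $\lagent$ never meets a strictly higher level and the elimination on line 6 of $\quick$ never applies to it. You instead argue through the $\done$ flag: $\done \gets \tr$ is only ever executed for a leader, so with $\lagent$ the unique leader no other agent can satisfy the $\done=\tr$ (resp.\ $\vdone$) guards of line 6 of $\quick$ and line 18 of $\popt$. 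Both arguments close by the same induction that simultaneously maintains uniqueness of $\lagent$, and yours has the small advantage of explicitly enumerating and dispatching every rule that can clear $\leader$ (including lines 1 and 18), which the paper leaves implicit. Your closing remark about choosing the constant in $\rmid$ to dominate the epidemic's constant is exactly the paper's ``$\rmid$ is a sufficiently large $\Theta(\para \log n)$ value.''
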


\begin{proof}
Let $C_0$ be any configuration in $\safe$
and let $\Xi=\Xi_{\popt(\para)}(C_0,\rs)$.
%By Lemma \ref{lemma:goElection},
%it holds with very high probability that
%$\Xi$ stays in $\ccheck$
%for $\Theta(n \rmid)=\Theta(\para n \log n)$ steps,
%it reaches a configuration $C' \in \celarge$ within $\Theta(\para n \log n)$ steps,
%and no agent goes back to the checking phase from the electing phase
%during this period. 
Since the unique leader $\lagent$ in $C_0$ satisfies $\lagent.\detect = 1$ and $\rmid$ is a sufficiently large $\Theta(\para \log n)$ value,
by Lemma \ref{lemma:epidemic} with $k=\para$ and the third claim of Lemma \ref{lemma:goElection},
all agents detect the existence of a leader quickly
with very high probability.
%Therefore, no follower becomes a leader (at Line 22)
%when it moves to the election phase the next time. 
Therefore, by the first and second claims of Lemma \ref{lemma:goElection},
it holds with very high probability that
$\Xi$ quickly reaches a configuration $C' \in \celarge$ where $\lagent$ is the unique leader
and no agent has a higher level than $\lagent.\level$.
%during this period and we have still exactly one leader in a configuration $C' \in \celarge$ with very high probability.
%Moreover, 
%no agent have a higher level than $\lagent.\level$ in $C'$
This is because $\lagent$ goes to the election phase exactly once before $\Xi$ reaches $C'$ with very high probability,
and the level of every agent is initialized to zero when it goes to the election phase. 
Thereafter,
$\lagent$ never becomes a follower before it goes back to the check phase and goes to the election phase again;
this is because only a leader can increase $\max_{v \in V} v.\level$.
By Lemma \ref{lemma:goChecking}, 
$\Xi$ quickly enters $\cclarge$ again 
and $\lagent$ does not move to the election phase from the check phase during this period with very high probability. 
At this time, from the above discussion, $\lagent$ is still
the unique leader in the population and $\lagent.\detect=1$ holds.
This means that the population has come back to $\safe$.
%Therefore, $\lagent$ does not observe a higehr level than its level
%while $\Xi$ s
%which means 

Now, we observed that an execution of $\popt$ under the uniformly random scheduler $\rs$ starting from any configuration in $\safe$
goes back to a configuration in $\safe$ after $\Theta(\para n \log n)$ steps and $\lagent$ is always the unique leader during this period
with very high probability. Therefore, letting $X=\min_{C \in \safe} \eht{\popt(\para)}{C}{\sle}$, we have $X \ge (1-O(n^{-\para}))(\Theta(\para n \log n)+X)$. Solving this inequality gives $X=\Omega(\para n^{\para+1}\log n)= \Omega(n^{\para+1})$.
\end{proof}

\subsection{Convergence Time}
\label{sec:convergence}

\begin{lemma}
\label{lemma:propWithSwitch}
Let $C_0$ be a configuration in $\celect$ where $\vundone = \emptyset$
holds and let $\Xi=\Xi_{\popt(\para)}(C_0,\rs)$.
Let $V' \subset \vl \cap \vcand$ be the set of leaders
whose $\level$ are not the largest in $C_0$.
Then, execution $\Xi$ reaches a configuration in $\celect$ where 
all agents in $V'$ are followers or enters $\creset$ within $O(n \log n)$ steps
with probability $1-o(1)$.
\end{lemma}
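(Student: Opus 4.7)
I want to show that the maximum leader-level $\ell^{\ast} = \max_{v \in \vl \cap \vcand} v.\level$ at $C_0$ propagates via the $\quick$-based epidemic through the done cands, so that within $O(n \log n)$ steps a constant fraction of $\vcand$ carries a level $\ge \ell^{\ast}$; each $v \in V'$ (a leader with $v.\level < \ell^{\ast}$) then collides with such an agent and is demoted by $\quick$ lines 5--6. Before $\Xi$ enters $\creset$, the execution stays in $\celect$ and no phase transition occurs, because the only elect-to-check rule (line 20 of Algorithm~\ref{al:popt}) immediately triggers $\creset$, and all other transitions require a $\vcheck$-agent that does not exist. In particular, no new leader is created by $\toelect$, so $\vundone = \emptyset$ is preserved and every leader in $\vcand$ has $\done = \tr$ throughout the window. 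Fix a leader $L$ with $L.\level = \ell^{\ast}$ at $C_0$, and define $U(t) = \{v \in \vcand : v.\done = \tr \wedge v.\level \ge \ell^{\ast}\}$ and $W(t) = \vcand \cap \{\done = \tr\} \supseteq U(t)$. The agent $L$ is an (almost) permanent source: $L.\done$ never reverts, $L.\level$ can only rise (via $\quick$ lines 2 and 7), and the only way $L$ can leave $\vcand$ is a mode switch (line 13 of Algorithm~\ref{al:popt}), which needs $L$ to be a follower; even if $L$ is later demoted by meeting a higher-level done cand, it first adopts that higher level via $\quick$ line 7 and so stays in $U$.

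\textbf{Epidemic growth and conversion.} Writing $k = |U|$, the per-step growth probability is $\Omega(k(|W|-k)/n^{2})$ (a $U$-member meets a lower-level done cand, which adopts a level $\ge \ell^{\ast}$ via $\quick$ line 7), and the per-step shrinkage probability is $O(k/n)$ (a $U$-follower becomes a synchronizer via line 13). While $|W|-k = \Omega(n)$ the net drift is $\Omega(k/n)$, and since $L$ provides the floor $k \ge 1$, a coupling with Lemma~\ref{lemma:epidemic} yields $k = \Omega(n)$ within $O(n \log n)$ steps with probability $1 - o(1)$. From that moment, each remaining $v \in V' \cap \vl$ is demoted per step with probability $\Omega(1/n)$ (it is a done cand of level $< \ell^{\ast}$ meeting a $U$-member, triggering $\quick$ lines 5--6); a Chernoff bound combined with a union bound over the $|V'| \le n$ agents then gives $V' \cap \vl = \emptyset$ within an additional $O(n \log n)$ steps with probability $1 - o(1)$, the invariant $k = \Omega(n)$ being maintained because the total mode-switching loss over the window is $O(n \log n) \ll n$.

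\textbf{Main obstacle.} The delicate step is the growth phase when $|W| - k$ is not of order $n$: then growth $\Omega(k(|W|-k)/n^{2})$ and shrinkage $O(k/n)$ have the same asymptotic order, and the sign of the drift hinges on constants. I expect to handle this via a birth-death chain comparison exploiting $L$'s persistence as a positive floor for $k$, and by observing that once $k$ crosses a constant-fraction-of-$n$ threshold the bulk conversion argument takes over. The weaker target probability $1 - o(1)$ (versus the paper's standard ``with high probability'' $1 - O(1/n)$) provides enough slack to carry out this analysis on a single execution, rather than via the repeated-trial amplification used elsewhere in Section~\ref{sec:sync_correct}.
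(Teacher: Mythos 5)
Your high-level strategy matches the paper's: track the number of max-level agents in $\vcand$, show it grows to $\Theta(n)$ despite mode switching, then demote every leader in $V'$ by collision. But the proof has a genuine gap, and it is exactly the step you defer in your ``main obstacle'' paragraph. Your stated shrinkage rate $O(k/n)$ for $U$ is the wrong order: a $U$-member is lost only via line 13, which requires \emph{two} followers in $\vcand$ with the \emph{same} level to interact, so both participants must lie in $U$ and the per-step loss probability is $O(k^2/n^2)$, not $O(k/n)$. With your bound, growth $\Omega(k(|W|-k)/n^2)$ and loss $O(k/n)$ are the same order even when $|W|-k=\Omega(n)$, so your claim that ``the net drift is $\Omega(k/n)$'' is unjustified and the birth--death comparison you propose cannot be closed. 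The quadratic loss rate is precisely what the paper exploits: it shows $\pinc \ge \nvm/(4n)$ while $\pdec \le \nvm^2/n^2$, which gives a constant-factor separation for all $\nvm \le n/2^8$ and hence a genuine multiplicative drift (the count doubles every $O(n)$ steps once it exceeds $\Theta(\log n)$). Without observing that the loss event needs a coincidence of two max-level followers, the argument does not go through.

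A second missing piece: your growth rate depends on $|W|$, i.e.\ on the number of (done) cands, and you never establish that this is $\Omega(n)$ --- mode switching could a priori drain $\vcand$ into $\vtimer$. The paper handles this as a separate first stage: when $|\vcand|\le n/3$, two synchronizers meet (creating a cand) with probability at least $4/9$ per step while a cand is lost with probability at most $1/9$, so $|\vcand|\ge n/4$ is reached in $O(n)$ steps and maintained for $\Omega(n\log n)$ steps with high probability. You would need this (or an equivalent lower bound on $|W|$) before the epidemic-growth analysis can start. Your observations about $\creset$, the preservation of $\vundone=\emptyset$, and the persistence of a source $L$ with the maximum level are correct and consistent with the paper, but the core drift computation --- the actual content of the lemma --- is not carried out.
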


\begin{proof}
In this proof, we ignore the case that some agent goes back to the check phase (\ie $\Xi$ enters $\creset$)
because this ignorance
only decreases the probability claimed in the lemma.
Let $l$ be the maximum level of the population
(\ie $\max_{v \in \vcand} v.\level$) in $C_0$.
To obtain the lemma, 
it suffices to show that
all leaders in $V'$ observe
the maximum level $l$
and become followers within $O(n \log n)$ steps
with probability $1-o(1)$.
%Lemma \ref{lemma:epidemic}
%is not enough because the mode switching from $\cand$ to $\timer$
%may slow donw the epidemic of the maximum level, as explained above.
%%First, we show that after $O(n \log n)$ steps.

First, we analyze $\nva = |\vcand|$ in execution $\Xi$.
%This value $\nva$ increases 
This value increases by one if two agents in $\vtimer$ meet,
and it decreases by one
if two followers with the same level in $\vcand$ meet. 
Therefore, at each step where $\nva \le n/3$,
$\nva$ increases with probability at least $4/9$,
while $\nva$ decreases with probability at most $1/9$.
The gap of these probabilities and the Chernoff bound guarantee that
even if $\nva < n/3$ in $C_0$, 
$\nva$ reaches $n/3$ within $O(n)$ steps with high probability.
Let $C'$ be the configuration at this time.
Once $\Xi$ reaches $C'$,
the above gap of probabilities, the Chernoff bound, 
and the union bound guarantee that
$\nva \ge n/4$ always holds for arbitrarily
large $\Omega(n \log n)$ steps with high probability.
Thus, we can assume $\nva \ge n/4$ in the following discussion
on execution $\Xi$ after $C'$.

Consider the postfix of $\Xi$ after $C'$.
Let $\nvm = |\{v \in \vcand \mid v.\level = l\}|$.
This value increases by one if
an interaction happens between two agents in $\vcand$ such that
one has the maximum level $l$
and the other has a lower level.
It decreases by one if an interaction happens between
two followers in $\vcand$, both with level $l$.
Note that $\nvm \ge 1$ always holds
because $\nvm$ decreases only if two agents with level $l$
have an interaction.
Since we assume $\nva \ge n/4$,
at each step where $\nvm \le n/8$,
$\nvm$ increases with probability $\pinc \ge (\nvm \cdot (n/4-\nvm))/\binom{n}{2} \ge \nvm/(4n)$,
while $\nvm$ decreases with probability $\pdec \le \nvm^2/n^2$.
%Since $p1/p2 \ge n/2\nvm \ge 4$ as long as $\nvm \le n/8$,
As long as $n/2^9 \le \nvm \le n/2^8$,
we have $\pinc \ge 1/2^{11}$ and $\pdec \le 1/2^{16}$.
This large difference between $\pinc$ and $\pdec$ guarantees that
once $\nvm$ reaches $n/2^8$,
$\nvm \ge n/ 2^9$ always holds for arbitrarily large $\Omega(n \log n)$ steps with high probability,
by the Chernoff bound and the union bound.
During this period, each leader in $V'$
meets an agent in $\vcand$ with the maximum level $l$
with probability $\Omega(1/n)$ at each step.
Thus, once $\nva \ge n/2^8$ holds,
all leaders in $V'$ become followers
within $O(n \log n)$ steps with high probability.

Thus, all we have to do is to show that $\nvm \ge n/2^8$ holds
within $O(n \log n)$ steps starting from $C'$.
First, we show that $\nvm$ reaches $24\ln n$ or larger
within $O(n \log n)$ steps starting from $C'$.
When $\nvm < 24 \ln n$,
$\pinc = \Omega(1/n)$ and $\pdec = O((\log^2 n) / n^2)$ always hold.
Therefore, by the Chernoff bound, $\nvm$ reaches $24\ln n$ or larger
within $O(n\log n)$ steps with high probability.
Next, for any integer $k$ such that $24\ln n \le k < n/2^8$,
we show that once $\nvm$ reaches $k$,
$\nvm$ reaches $2k$ with high probability.
As long as $k/2 \le \nvm \le 2k$,
we have $\pinc \ge k/8n$ and $\pdec \le 4k^2/n^2 < k/64n$.
Therefore, by the Chernoff bound, we have the followings:
\begin{itemize}
 \item 
during the first $16n$ steps,
$\nvm$ is always $k/2$ or larger with probability 
at least $1-e^{-(1/3)\cdot (k/4)}=1-O(1/n^2)$,
 \item
%Among the firsts $x \ge 16n$ steps,
%at least $xk/16n$ steps increase  $\nvm$ 
%with probabiilty 
during the first $x \ge 16n$ steps,
$\nvm$ increases at least $xk/16n$ times
with probability $1-O(1/n^2)$, and 
 \item
during the first $x \ge 16n$ steps,
$\nvm$ decreases at most $xk/32n$ times
with probability $1-O(1/n^2)$.
\end{itemize}
Therefore, by the union bound (for $x=16n,16n+1,\dots,32n$),
$\nvm$ reaches $k+(2k-k)=2k$ within $32n$ steps
with high probability.
Therefore, once $\nvm$ reaches $24 \ln n$ or larger value,
it doubles in every $32n$ steps with high probability
until it reaches $n/2^8$.
%By combining these two claims and by the union bound,
Thus, $\nvm$ reaches $n/2^8$ within $O(n \log n)$ steps
with probability $1-O((\log n)/n)=1-o(1)$.
\end{proof}

%\subsection{Checking the Absence of a Leader}
%\label{sec:check}
%
%
%
%\subsection{Implementation}
%\label{sec:implement}

\begin{lemma}
\label{lemma:convergence} 
$\max_{C \in \call(\popt(\para))} \ect{\popt(\para)}{C}{\safe}= O(\para n \log n)$.
\end{lemma}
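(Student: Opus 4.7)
The plan is to combine the synchronization guarantees of Section~\ref{sec:sync_correct} with the lottery-game analyses (Lemmas~\ref{lemma:game} and~\ref{lemma:propWithSwitch}), and then use a geometric-trial argument on what I will call \emph{supercycles}: one check portion (population in $\cclarge \subseteq \ccheck$) followed by one election portion (population in $\celect$).

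First, by Lemma~\ref{lemma:cclarge}, from any starting configuration the execution reaches $\cclarge$ in $O(\para n \log n)$ steps with very high probability, which the usual tail argument (iterating the lemma with a growing parameter) promotes to the same bound in expectation. Once in $\cclarge$, each supercycle takes $O(\para n \log n)$ steps with very high probability by Lemmas~\ref{lemma:goElection} and~\ref{lemma:goChecking}, and both of its two portions individually last $\Omega(\para n \log n)$ steps, leaving enough room to run the subroutines invoked inside them.

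Within one supercycle I claim that the final configuration lies in $\safe$ with some fixed probability $p = \Omega(1)$. During the check portion, if any agent currently carries $\detect = \tr$ then Lemma~\ref{lemma:epidemic} (with $k = \para$) spreads it to the whole population before any timeout, so no spurious leader is created by $\toelect$; otherwise every agent becomes a leader at the transition. Either way the election portion begins with at least one leader whose electors are (re)initialized to $\done = \fl$ and $\level = 0$. Then Lemma~\ref{lemma:done} guarantees that every leader finishes its coin flips within $O(n \log n)$ steps with very high probability; Lemma~\ref{lemma:game} gives probability at least $1/16$ that exactly one leader attains the maximum $\level$; and Lemma~\ref{lemma:propWithSwitch} turns every other leader into a follower within a further $O(n \log n)$ steps with probability $1 - o(1)$. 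The unique survivor cannot lose its $\leader$ flag afterwards, since both demotion rules (Line 6 of Algorithm~\ref{al:quick} and Line 18 of Algorithm~\ref{al:popt}) require the presence of another agent of at least the same level. On the return transition this lone leader resets $\detect = \tr$, and once the population settles back in $\cclarge$ we are in $\safe$.

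Finally, the number of supercycles until success is stochastically dominated by a geometric random variable with parameter $p$ and hence has expectation $O(1)$; composing with the per-supercycle time bound $O(\para n \log n)$ (which, again by iterating the relevant lemmas with growing parameter, holds in expectation as well as with very high probability) yields the claim. The main obstacle will be combining the three probability estimates (Lemma~\ref{lemma:game}'s constant success probability, Lemma~\ref{lemma:propWithSwitch}'s $1 - o(1)$, and the very-high-probability synchronization bounds) inside a single supercycle without cascading losses, and in particular handling the corner case of a zero-leader initial configuration where the check portion must first repopulate the leader set before QE can filter it back down to one.
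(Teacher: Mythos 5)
Your overall strategy coincides with the paper's: reduce to the synchronized check/election cycle via Lemmas~\ref{lemma:cclarge}, \ref{lemma:goElection}, and \ref{lemma:goChecking}; use Lemmas~\ref{lemma:done}, \ref{lemma:game}, and \ref{lemma:propWithSwitch} to argue that one election phase isolates a unique leader with constant probability; and convert that constant probability into an $O(\para n \log n)$ expectation by a geometric/recursive argument (your stochastic domination by a geometric variable is the same device as the paper's inequality $Y \le O(\para n \log n) + (1-\Omega(1))Y$).

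There is, however, one concrete step that fails as written. You claim that ``either way the election portion begins with at least one leader,'' splitting on whether some agent carries $\detect = \tr$ during the check portion. In a loosely-stabilizing setting the initial configuration is adversarial, so the population may contain an agent with $\detect = \tr$ while $\vl = \emptyset$. In that case the epidemic dutifully spreads $\detect = \tr$ to everyone, $\toelect$ creates no leader, and your first election portion runs with \emph{zero} leaders --- Lemma~\ref{lemma:game} is then vacuous and the supercycle cannot end in $\safe$. Note that this is not the corner case you flag at the end (all-$\fl$ $\detect$ with no leader, which your argument does handle); it is the opposite inconsistency. The paper avoids it by ordering the phases the other way: it first drives the execution into $\celarge$ and passes through the election-to-check transition, at which point every agent executes $\detect \gets \leader$, re-establishing the invariant $\exists v \in V: v.\detect = \tr \Leftrightarrow \vl \neq \emptyset$ in the resulting configuration $C' \in \cclarge$; only then does it run your check-then-election analysis. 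Your proof is repairable by prepending one such half-cycle (or equivalently by observing that from the second supercycle onward the $\detect$ flags are consistent), at the cost of only another $O(\para n \log n)$ steps, but as stated the claimed per-supercycle success probability $p = \Omega(1)$ is false for the first supercycle.
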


\begin{proof}
Let $C_0$ be any configuration in $\call(\popt(\para))$
and let $\Xi=\Xi_{\popt(\para)}(C_0,\rs)$.
It suffices to show that $\Xi$ enters $\safe$
quickly with a constant probability;
this is because, letting
$Y= \max_{C \in \call(\popt(\para))} \ect{\popt(\para)}{C}{\safe}$,
it yields $Y \le O(\para n \log n) + (1-\Omega(1)) Y$,
and this inequality gives $Y = O(\para n \log n)$.

%Moreover, since $\bmax$ is sufficintly large,
%it suffices to show that 
We can assume $C_0 \in \celarge$ by Lemmas \ref{lemma:cclarge}
and \ref{lemma:goElection}.
%Since $bmax$ is sufficintly large,
%by Lemmas 
By Lemma \ref{lemma:goChecking}, $\Xi$ reaches a configuration
$C' \in \cclarge$ within $O(\para n \log n)$ steps
and no agent goes to the electing phase from the check phase
during this period with very high probability.
An agent executes $\detect \gets \leader$ when it goes back to
the check phase (See Table \ref{tbl:variables}).
%and executes $\leader \gets \fl$ if its mode is $\timer$ in $C_0$.
Therefore, we have $\exists v \in V: v.\detect = 1 \Leftrightarrow \vl \neq \emptyset$
%and every leader has $\mode = \cand$
in $C'$.
Hence, after $\Xi$ reaches $C'$, at least one follower becomes a leader when it goes to the electing phase if there exists no leader in $C'$.
Moreover, 
the agents initialize their $\level$ and $\detect$ to $0$
when they move to the election phase.
Therefore, by Lemmas \ref{lemma:goElection}, \ref{lemma:goChecking}, \ref{lemma:done}, and \ref{lemma:game},
%Thus, the assumption of the lottery game is satisfied:
%at least one leader exists in the population. 
$\Xi$ quickly reaches a configuration $C'' \in \celarge$
where exactly one leader, say $\lagent$, has the maximum level
with a constant probability.
%Here, we can ignore the possibility that
%some agent
%goes back to the check phase
%during these $O(\para n \log n)$ steps
%%with very high probability
%by the third claim of Lemma \ref{lemma:goChecking}, by the assumption that $\bmax$
%is sufficiently large, and by the union bound.
Thereafter, by Lemma \ref{lemma:propWithSwitch},
$\Xi$ reaches a configuration in $\celect$ where only $\lagent$
is a leader within $O(n \log n)$ steps with probability $1-o(1)$.
In the next $O(\para n \log n)$ steps,
$\Xi$ enters $\safe$ with very high probability
by Lemma \ref{lemma:goChecking}.
By summing up all error probabilities,
we conclude that $\Xi$ enters $\safe$
within $O(\para n \log n)$ steps with a constant probability.
\end{proof}

\begin{theorem}
\label{theorem:main}
For any positive integer $\para$,
$\popt(\para)$ is an $(O(\para n \log n),\Omega(n^{\para}))$-loosely-stabilizing leader election protocol.
\end{theorem}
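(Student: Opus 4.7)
The plan is to observe that Theorem~\ref{theorem:main} follows immediately from the two lemmas just established in the preceding subsections, combined with the definition of an $(\alpha,\beta)$-loosely-stabilizing leader election protocol. All the heavy lifting has already been done: Lemma~\ref{lemma:convergence} provides the upper bound on the expected convergence time into $\safe$, and Lemma~\ref{lemma:holding} provides the lower bound on the expected holding time of configurations in $\safe$. Thus, no new probabilistic or combinatorial argument is needed; the theorem is essentially the packaging of these two bounds.

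Concretely, I would first recall the definition of $(\alpha,\beta)$-loosely-stabilizing leader election: we need a set $\mathcal{S}$ of configurations such that $\max_{C \in \call(\popt(\para))} \ect{\popt(\para)}{C}{\mathcal{S}} \le \alpha$ and $\min_{C \in \mathcal{S}} \eht{\popt(\para)}{C}{\sle} \ge \beta$. I would then set $\mathcal{S} = \safe$, where $\safe$ is the set of safe configurations introduced at the beginning of Section~\ref{sec:analysis} (those in which $V=\vclarge$, a unique leader $\lagent$ exists, and $\lagent.\detect = \tr$). Correctness of this choice (in particular, that every configuration in $\safe$ has exactly one leader, which is required for $\beta > 0$) is immediate from the definition of $\safe$.

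Next I would cite Lemma~\ref{lemma:convergence} to assert $\max_{C \in \call(\popt(\para))} \ect{\popt(\para)}{C}{\safe} = O(\para n \log n)$, giving $\alpha = O(\para n \log n)$. Then I would cite Lemma~\ref{lemma:holding} to assert $\min_{C \in \safe} \eht{\popt(\para)}{C}{\sle} = \Omega(n^{\para+1})$, and note that $\Omega(n^{\para+1}) \subseteq \Omega(n^{\para})$, so we may take $\beta = \Omega(n^{\para})$. Plugging these into the definition yields the theorem.

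There is essentially no obstacle: the proof is a one-line invocation of the two lemmas and the definition. The only thing to be mindful of is the slight slack between the $\Omega(n^{\para+1})$ bound in Lemma~\ref{lemma:holding} and the $\Omega(n^{\para})$ claimed in the theorem, which is simply a matter of weakening the bound (and reflects the translation between step count and parallel time mentioned in Section~\ref{sec:plog}). I would close with a brief remark that, in terms of parallel time, the convergence and holding times are $O(\para \log n)$ and $\Omega(n^{\para})$, respectively, matching the claim in Section~\ref{sec:contribution}.
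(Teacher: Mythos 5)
Your proposal is correct and follows exactly the paper's own proof, which simply invokes Lemma~\ref{lemma:holding} and Lemma~\ref{lemma:convergence} with $\mathcal{S}=\safe$; your additional remarks about the definition and the slack between $\Omega(n^{\para+1})$ and $\Omega(n^{\para})$ are accurate elaborations of the same one-line argument.
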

\begin{proof}
Immediately follows
from Lemmas \ref{lemma:holding} and \ref{lemma:convergence}.
\end{proof}

%\section{Discussion}

\section{Conclusion}
We gave a time-optimal loosely-stabilizing leader election protocol in the population protocol model. 
Let $n$ be the number of agents in the population.
Given a design parameter $\para \ge 1$
and integer $N \ge n$ such that $N$ is at most polynomial in $n$, the proposed protocol elects the unique leader
within $O(\para \log n)$ parallel time
starting from  any configuration
and keeps it for $\Omega(n^{\para})$ parallel time,
both in expectation. 

%%
%% The acknowledgments section is defined using the "acks" environment
%% (and NOT an unnumbered section). This ensures the proper
%% identification of the section in the article metadata, and the
%% consistent spelling of the heading.
%\paragraph*{Acknowledgments}
%Thank you so much!

%%
%% The next two lines define the bibliography style to be used, and
%% the bibliography file.

\bibliographystyle{abbrv}
\bibliography{paper6}

\clearpage

\appendix

\section*{Appendix}
\section{Chernoff Bounds}
%Throughout this paper, we often use 
%the following two variants of Chernoff 
%bounds. 
\begin{lemma}[\cite{kyoukasyo}, Theorems 4.4, 
4.5] 
\label{lemma:chernoff} 
Let $X_1,\dots,X_s$ be independent Poisson 
trials, 
and let $X = \sum_{i=1}^s X_i$. 
Then 
\begin{align} 
\label{eq:upperdouble} 
\forall \delta,~0 \le \delta \le 1:~ 
&
\Pr(X \ge (1+\delta)\ex[X]) \le 
e^{-\delta^2\ex[X]/3},\\ 
\label{eq:lowerhalf} 
\forall \delta,~0 < \delta < 1:~ 
&
\Pr(X \le (1-\delta)\ex[X]) \le 
e^{-\delta^2\ex[X]/2}. 
\end{align} 
\end{lemma}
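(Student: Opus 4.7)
The plan is to follow the standard moment-generating-function (Chernoff) argument. Write $\mu = \ex[X]$ and $p_i = \Pr(X_i = 1)$, so $\mu = \sum_{i=1}^s p_i$ by linearity. Both inequalities in the statement arise from applying Markov's inequality to $e^{tX}$ for a suitable choice of $t$, then exploiting independence of the $X_i$ to factor the moment generating function, then optimizing $t$, and finally reducing the resulting expression to the stated clean exponential form.

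For the upper tail, I would fix any $t > 0$ and apply Markov to the non-negative random variable $e^{tX}$ to obtain
$$
\Pr(X \geq (1+\delta)\mu) \;=\; \Pr(e^{tX} \geq e^{t(1+\delta)\mu}) \;\leq\; \frac{\ex[e^{tX}]}{e^{t(1+\delta)\mu}}.
$$
Independence gives $\ex[e^{tX}] = \prod_i \ex[e^{tX_i}]$, and each factor is bounded by the elementary estimate $\ex[e^{tX_i}] = 1 + p_i(e^t - 1) \leq \exp(p_i(e^t - 1))$, using $1 + x \leq e^x$. Hence $\ex[e^{tX}] \leq \exp((e^t - 1)\mu)$. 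Substituting $t = \ln(1+\delta)$ (the minimizer) then yields
$$
\Pr(X \geq (1+\delta)\mu) \;\leq\; \left(\frac{e^{\delta}}{(1+\delta)^{1+\delta}}\right)^{\mu}.
$$
The remaining analytic step is to show $\delta - (1+\delta)\ln(1+\delta) \leq -\delta^{2}/3$ for $0 \leq \delta \leq 1$, which converts the raw bound into $e^{-\delta^{2}\mu/3}$.

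For the lower tail I would repeat the argument symmetrically, applying Markov to $e^{-sX}$ with $s > 0$ and optimizing at $s = -\ln(1-\delta)$. This yields the analogous raw bound $\Pr(X \leq (1-\delta)\mu) \leq (e^{-\delta}/(1-\delta)^{1-\delta})^{\mu}$, which I would convert to $e^{-\delta^{2}\mu/2}$ via the companion inequality $-\delta - (1-\delta)\ln(1-\delta) \leq -\delta^{2}/2$ valid for $0 < \delta < 1$. Note that the constant $2$ appears in place of $3$ because the Taylor expansion of $(1-\delta)\ln(1-\delta)$ is tighter on one-sided deviations below the mean.

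The main obstacle is not the probabilistic part (which is essentially forced once one commits to the MGF method) but the verification of the two one-variable calculus inequalities in the final step. Each can be handled by defining $g(\delta) = (1+\delta)\ln(1+\delta) - \delta - \delta^{2}/3$ (respectively the lower-tail analogue), checking $g(0) = g'(0) = 0$, and arguing about the sign of $g''$ on the relevant interval, but attention is needed because the constants $1/3$ and $1/2$ are the tight values that make the inequalities work on the full ranges $[0,1]$ and $(0,1)$ respectively. Since the statement is quoted verbatim from \cite{kyoukasyo}, I would present these analytic reductions in a single sentence and defer detailed verification to the reference.
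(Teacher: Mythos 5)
Your proposal is correct and follows exactly the standard moment-generating-function argument of the cited reference (Mitzenmacher--Upfal, Theorems 4.4 and 4.5); the paper itself gives no proof, simply quoting the bounds, so there is nothing to diverge from. The only nitpicks are the reuse of $s$ both for the number of trials and for the exponent parameter in the lower-tail step, and the aside that $1/3$ is ``tight'' on $[0,1]$ (the binding constraint at $\delta=1$ actually allows any constant up to $2\ln 2 - 1 \approx 0.386$); neither affects correctness.
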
 
\end{document}